\newtheorem{theorem}{Theorem}
\newtheorem{lemma}[theorem]{Lemma}
\newtheorem{corollary}[theorem]{Corollary}
\newcommand{\iocp}{\mathrm{iocp}}
\newcommand{\Prob}{\mathrm{Prob}}
\renewcommand{\SS}{\mathcal{S}}
\newcommand{\XX}{\mathcal{X}}
\newcommand{\E}{{\mathbb{E}}}
\title{Induced Odd Cycle Packing Number, Independent Sets, and Chromatic Number\thanks{An extended abstract of this paper was published in proceedings of EUROCOMB 2019}}
\author{Zden\v{e}k Dvo\v{r}\'ak\thanks{Charles University, Prague, Czech Republic.  E-mail: {\tt rakdver@iuuk.mff.cuni.cz}.
Supported by project 17-04611S (Ramsey-like aspects of graph coloring) of Czech Science Foundation.}\and
Jakub Pek\'arek\thanks{Charles University, Prague, Czech Republic.  E-mail: {\tt pekarej@iuuk.mff.cuni.cz}.
Supported by GAUK grant 118119 (Algorithms for graphs with restrictions on cycles).}}
\begin{document}
\maketitle

\begin{abstract}
The \emph{induced odd cycle packing number} $\iocp(G)$ of a graph $G$ is the maximum integer $k$ such that $G$
contains an induced subgraph consisting of $k$ pairwise vertex-disjoint odd cycles.
Motivated by applications to geometric graphs, Bonamy et al.~\cite{indoc} proved that
graphs of bounded induced odd cycle packing number, bounded VC dimension, and linear independence number
admit a randomized EPTAS for the independence number.  We show that the assumption of bounded VC dimension
is not necessary, exhibiting a randomized algorithm that for any integers $k\ge 0$ and $t\ge 1$
and any $n$-vertex graph $G$ of induced odd cycle packing number at most $k$ returns in time $O_{k,t}(n^{k+4})$ an independent
set of $G$ whose size is at least $\alpha(G)-n/t$ with high probability. In addition, we present $\chi$-boundedness results
for graphs with bounded odd cycle packing number, and use them to design a QPTAS for the independence number
only assuming bounded induced odd cycle packing number.
\end{abstract}

\section{Introduction}

The graph classes defined by forbidden cycles or induced cycles of certain
lengths figure prominently in the structural graph theory, motivated in
particular by the Strong Perfect Graph Theorem~\cite{sperft}, which shows that
perfect graphs are characterized by forbidden odd holes and their complements. 
The most interesting graph parameters in the context of these graph classes are
the chromatic number, the independence number, and the clique number: while they are NP-hard to
approximate within any fixed precision~\cite{colnonap} in general graphs, using semidefinite programming
they can be determined in polynomial time for perfect graphs~\cite{GLS}. 

Perfect graphs also motivate the concept of $\chi$-boundedness. A class of graphs is \emph{$\chi$-bounded}
if the chromatic number of the graphs from the class can be bounded by a function of the clique number
(of course, there is no such function in general, due to numerous known constructions of triangle-free
graphs of arbitrarily large chromatic number).  The notion of $\chi$-boundedness was introduced by Gyarf\'{a}s~\cite{gyarfas1987},
who also proposed a number of influential questions on this topic. As an
example, he conjectured that graphs without odd holes are $\chi$-bounded; this
conjecture was only recently confirmed by Scott and Seymour~\cite{SS}. In a similar
vein, Bonamy, Charbit, and Thomass\'{e}~\cite{BCT} showed that graph classes that forbid
induced cycles of length that is a multiple of 3 have bounded chromatic number, inspiring further study of connection between induced cycles of specific lengths and chromatic number. 

The aforementioned ideas commonly appear in the context of geometrically defined graph classes.
The main motivation for our work comes from recent algorithmic results exploiting the following property
\begin{itemize}
\item[($\star$)] no induced subgraph is the disjoint union of two odd cycles,
\end{itemize}
which holds in certain geometric graph classes.  Specifically, Bonnet et al.~\cite{BGKRS} proved that
the complements of intersection graphs of disks in the plane have the property ($\star$), and gave a QPTAS and
an exact subexponential-time algorithm for the independence number of graphs satisfying the property ($\star$).
In combination, this gives a QPTAS and an exact subexponential-time algorithm for the clique number of
intersection graphs of disks\footnote{Bonnet et al.~\cite{BGKRS} only explicitly give the algorithms
for intersection graphs of disks, but the inspection of their algorithms shows that they only use the property ($\star$),
and no other properties specific to intersection graphs of disks.}.
Building upon these results, Bonamy et al.~\cite{indoc} proved that the complements of the intersection graphs of unit balls
in 3-dimensional space also have the property ($\star$).  Moreover, they gave a randomized algorithm
to approximate the clique number arbitrarily well in polynomial time (a randomized efficient polynomial-time approximation scheme)
for intersection graphs of disks and of unit balls; however, in addition to the property ($\star$),
they use further properties derived from the geometry of the problems, namely bounded VC dimension and
a linear lower bound on the independence number.

More generally, let the \emph{induced odd cycle packing number}, $\iocp(G)$ of a graph $G$
be the maximum integer $k$ such that $G$ contains an induced subgraph
consisting of $k$ pairwise vertex-disjoint odd cycles (hence, $G$ has the property ($\star$) if and only
of $\iocp(G)\le 1$). Note that whenever two cycles are connected by an edge, their vertices do not induce such a subgraph;
furthermore, unlike the perfect graph case, we work with all odd cycles, not just with
odd holes (i.e., $\iocp$ takes into account also triangles).
Bonamy et al.~\cite{indoc} proved the following.

\begin{theorem}[Bonamy et al.~\cite{indoc}]
There exists a randomized algorithm taking as an input integers $k,b,c,a\ge 1$ and
graph $G$ of VC dimension at most $c$ such that $\iocp(G) \leq k$ and $\alpha(G)\ge |V(G)|/b$, and in time
$O_{k,b,c,a}(|V(G)|^2)$ returns an independent set of $G$ of size at least $(1-1/a)\alpha(G)$.
\end{theorem}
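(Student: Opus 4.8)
The plan is to reduce the problem to solving polynomially many \emph{exactly bipartite} instances, using the bound on $\iocp$ to produce those instances by deleting a bounded number of BFS layers, and to use the bounded VC dimension only so that a constant-size random sample can certify which deletion is cheap. Throughout the analysis fix a maximum independent set $I$; since $|I|\ge n/b$ we may afford to discard up to $|I|/a$ vertices of $I$ along the way, and we may brute-force the finitely many graphs with $n=O_{k,b,c,a}(1)$. Assume $G$ is connected; the general case follows by handling components separately (odd cycles in distinct components are anticomplete, so fewer than $k$ components contain an odd cycle).

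\emph{Structural core.} In a BFS layering of a connected graph, a shortest odd cycle is induced, and every induced odd cycle — being connected, with consecutive vertices in consecutive layers — occupies an \emph{interval} of layers; moreover, two induced odd cycles whose layer-intervals lie at distance at least $2$ have no edge between them and are therefore anticomplete. A $k$-OC-free graph has no $k$ pairwise anticomplete induced odd cycles, so by the interval-stabbing/Helly duality there is a set $\Lambda$ of at most $k-1$ layers meeting every induced odd cycle; hence $G-\bigcup_{j\in\Lambda}L_j$ has no odd cycle at all and is bipartite, and its independence number is computed exactly by bipartite matching (K\"onig's theorem). I would also record the companion fact that $G-N[V(C)]$ is $(k-1)$-OC-free for every induced odd cycle $C$ — $k-1$ pairwise anticomplete induced odd cycles surviving there would, together with $C$, witness $\iocp(G)\ge k$ — since it provides an alternative ``peeling'' reduction, useful whenever the layer-cut above turns out to be expensive.

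\emph{Algorithm and the role of VC dimension.} Draw a random sample $W\subseteq V(G)$ of size $O_{k,b,c,a}(1)$ and enumerate its $2^{|W|}$ subsets $J$; by the $\varepsilon$-approximation theorem for set systems of bounded VC dimension, with high probability the correct guess $J=I\cap W$ lets us estimate $|I\cap N(v)|$ to within an additive $\varepsilon|I|$ for \emph{every} vertex $v$ — and similarly, by a Chernoff bound, $|I\cap L_\Lambda|$ for the constantly many cuts inspected below. For each guess $J$ we sample $O_{k,b,c,a}(1)$ BFS roots, compute for each the bipartizing cut $L_\Lambda=\bigcup_{j\in\Lambda}L_j$ of the structural core, keep the root minimising $|J\cap L_\Lambda|$ (whose smallness certifies that $|I\cap L_\Lambda|$ is small), and solve $G-L_\Lambda$ exactly. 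We output the largest independent set found over all samples (boosted by repetition) and all guesses; since every such set is genuinely independent, correctness reduces to showing that for the correct guess some sampled root yields a cut with $|I\cap L_\Lambda|\le |I|/a$.

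\emph{The main obstacle.} That last assertion is the crux, and it is precisely where $\iocp(G)<k$ and $\alpha(G)\ge n/b$ must be used together: one wants that in a connected $k$-OC-free graph whose largest independent set $I$ is linear in $n$, a BFS from a suitably chosen (equivalently, a random) root produces a $(k-1)$-layer bipartizing cut meeting $I$ only sparsely. This is easy to verify on the natural extremal examples — blow-ups of odd cycles, complete-split-type graphs, windmills — where one can aim the BFS so that the few offending layers avoid most of $I$; but the general statement seems to call for a dichotomy on the odd girth, facilitated by the preprocessing step of deleting all vertices with more than $\varepsilon|I|$ neighbours in $I$ (none of which lie in $I$, so that in the remaining graph every vertex has bounded ``$I$-degree''): when the odd girth is small one finds a short induced odd cycle whose closed neighbourhood is $I$-light and peels it off via the companion reduction, lowering $k$; when the odd girth is large, the necessarily long induced odd cycles are confined, by the interval/clustering argument, to a narrow band of layers that — because $I$ is linear — can be aimed into an $I$-sparse region. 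Making this precise, and fitting the whole procedure into $O_{k,b,c,a}(n^2)$ time (in particular, detecting and finding a short induced odd cycle fast enough), is the part I expect to require genuine care.
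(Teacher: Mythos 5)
This theorem is quoted from Bonamy et al.\ and the paper does not reproduce its proof; the paper's own contribution (Theorem~\ref{thm-main}) is a \emph{stronger} result that removes the VC-dimension hypothesis, via a different decomposition: it deletes a \emph{single} BFS layer together with a ball around a vertex of a shortest odd cycle (Lemmas~\ref{lemma-short-nbr} and~\ref{lemma-noshort}), rather than a collection of layers chosen to stab all induced odd cycles. So there is no ``paper's own proof'' to compare to; I will judge your plan on its own terms.

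Your plan has two genuine gaps. First, the interval-stabbing step is misapplied. The packing/covering duality for intervals equates the maximum number of pairwise \emph{disjoint} intervals with the minimum number of stabbing points, but your packing hypothesis only bounds the number of intervals at pairwise distance at least~$2$ (this is the condition you correctly need for anticompleteness). Inflating each interval $[a,b]$ to $[a-1,b+1]$ repairs the packing side, but then the $\le k-1$ stabbing points stab the \emph{inflated} intervals, and a point that lies in $[a-1,b+1]$ but not in $[a,b]$ does not meet the cycle; deleting that layer leaves the cycle intact. As written, $G-\bigcup_{j\in\Lambda}L_j$ can still contain induced odd cycles. (One can rescue ``bipartite after deleting $O(k)$ layers'' by also removing the two neighbouring layers of each stabbing layer, but this is not what you state, and it does not help with the second gap.) Second --- and you flag this yourself --- you give no argument that some BFS root produces a bipartizing layer-cut that is sparse in a fixed optimum independent set $I$, and this is genuinely the hard part of any such proof: when $G$ has small diameter there are only $O(1)$ BFS layers from any root and each is a constant fraction of $V(G)$, so no choice of root makes the cut light, and the odd-girth dichotomy you sketch does not obviously resolve this --- a geodesic odd cycle of length $g$ spans about $g/2$ consecutive BFS layers, not a narrow band. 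Until that step is filled in, what you have is a plan rather than a proof.
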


Later (private communication), they proved that one can remove the assumption that the VC dimension is bounded, at the expense
of making the exponent in the time complexity depend on the desired precision,
i.e., obtaining a PTAS with time complexity $O(|V(G)|^{f(k,b,a)})$ for some function $f$ rather than an EPTAS.
As our first result, we show that it is not necessary to make this sacrifice, obtaining an EPTAS without
the assumption of bounded VC dimension. We present the improved argument in section \ref{secEPTAS}.

In section \ref{secChiBound} we focus our attention on coloring. We show that the class of graphs $G$ such that $\iocp(G) \leq k$ are
$\chi$-bounded by a function polynomial in the clique number (with the degree of the polynomial depending linearly on $k$).
Furthermore, our proof of this fact can be turned into an algorithm running in polynomial time for fixed maximum clique size.
In section \ref{secLowerBound} we follow by a lower bound on the $\chi$-bounding function for the case when $k=1$.
Finally, in section~\ref{secQPTAS}, we show how these results can be combined to obtain a QPTAS for
the maximum independent set in graphs with bounded induced odd cycle packing number (generalizing a result of Bonnet et al.~\cite{BGKRS},
who gave such a QPTAS for graphs with induced odd cycle packing number at most one).

Let us remark that Bock et al.~\cite{bock} considered a non-induced version of the notion,
and in particular give approximation algorithms for the independence number of graphs with bounded
number of pairwise-disjoint odd cycles. However, this setting is quite distinct from ours, since
e.g. a clique $K_n$ contains $\lfloor n/3\rfloor$ disjoint odd cycles, but $\iocp(K_n)\le 1$.

\section{Graphs of large odd girth}

Our EPTAS is structured as follows. First, we deal with the case of high odd-girth as a base case.
Then we investigate graph that cannot be reduced to this base case without distorting the solution too
much due to many disjoint short odd cycles.  We show that in such case, it is
possible to either reduce $\iocp$ or (with high probability) the number of vertices of the input graph without
distorting the solution too much, producing a recursive algorithm.

In this section, we deal with the high odd girth case.  Note that this is only a mild variation on what
already appeared in Bonamy et al.~\cite{indoc}; we describe it here in detail, as they do not state
the result separately, only as a part of a longer argument.  We also give the result in larger generality
than we actually need, in a weighted setting, as this is quite natural (unfortunately, the rest of our
argument only works in unweighted setting).

The \emph{odd girth} of a graph $G$ is the length of shortest odd cycle appearing in $G$.
We need an observation on neighborhoods of such shortest odd cycles. Essentially, it is possible to remove a small ``wedge''
from the close neighborhood of the cycle (destroying the odd cycle in the process) so that the remainder of the neighborhood is bipartite. 

For a set $Y\subseteq V(G)$, let $N(Y)$ denote the set of vertices at distance exactly $1$ from $Y$ (vertices from $V(G)\setminus Y$ with a neighbor in $Y$).
For a positive integer $r$, let $N_r[Y]$ denote the set of vertices at distance at most $r$ from $Y$; note that $Y\subseteq N_r[Y]$.
Let $N[Y]=N_1[Y]$.  For a vertex $v\in V(G)$, let $N(v)=N(\{v\})$, $N_r[v]=N_r[\{v\}]$, and $N[v]=N[\{v\}]$.

\begin{lemma}\label{lemma-short-nbr}
Let $g$ be an odd integer, let $G$ be a graph of odd girth at least $g$, and let $C$ be a shortest odd cycle in $G$.
Let $t\le (g-1)/2$ be a non-negative integer, and suppose that every vertex of $G$ is at distance at most $t$ from $C$.
Let $z$ be a vertex of $C$, let $A = N_t[z] \cap C$,
and let $R = N_t[A]$.  Then $G-R$ is bipartite.
\end{lemma}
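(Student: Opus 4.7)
My plan is to argue by contradiction: suppose $G - R$ contains an odd cycle $C' = u_0 u_1 \cdots u_{k-1} u_0$ with $k$ odd, and derive a parity obstruction by projecting $C'$ onto $C$. Label $V(C) = \{v_0, v_1, \ldots, v_{\ell-1}\}$ cyclically with $v_0 = z$, so $A = \{v_{-t}, \ldots, v_t\}$ (indices modulo $\ell$) and $B := V(C) \setminus A = \{v_{t+1}, \ldots, v_{\ell-t-1}\}$. For each $u \in V(G)$, fix a closest vertex $\pi(u) \in V(C)$, set $\rho(u) = d_G(u, V(C)) \le t$, and write $j(u)$ for the index of $\pi(u)$. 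For $u \in V(G) \setminus R$ one has $\pi(u) \in B$, because $\pi(u) \in A$ would force $d_G(u, A) \le \rho(u) \le t$ and hence $u \in R$. Throughout I use the standard fact that, since $C$ is a shortest odd cycle, $d_G(v_a, v_b) = d_C(v_a, v_b) = \min(|a-b|, \ell - |a-b|)$ for all $v_a, v_b \in V(C)$ --- otherwise splicing a shorter $G$-path with an arc of $C$ would produce an odd cycle shorter than $\ell$.

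I would evaluate $S := \sum_{i=0}^{k-1} d_C(\pi(u_i), \pi(u_{i+1}))$ modulo $2$ in two ways. For the first, the triangle inequality gives $d_C(\pi(u_i), \pi(u_{i+1})) \le \rho(u_i) + 1 + \rho(u_{i+1}) \le 2t+1$ for each edge of $C'$. Since $\pi(u_i), \pi(u_{i+1}) \in B$, the two $C$-arcs between them have lengths $|j(u_i) - j(u_{i+1})|$ and $\ell - |j(u_i) - j(u_{i+1})| \ge \ell - (|B|-1) = 2t+2$; the second exceeds $2t+1$, so $d_C(\pi(u_i), \pi(u_{i+1})) = |j(u_i) - j(u_{i+1})|$. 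Since $\sum_i (j(u_{i+1}) - j(u_i)) = 0$ telescopically and $|x| \equiv x \pmod 2$, we conclude $S \equiv 0 \pmod 2$.

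For the second evaluation, classify each edge $u_i u_{i+1}$ by $r := \rho(u_i), r' := \rho(u_{i+1})$ with $|r-r'| \le 1$. Concatenating shortest $u_i$-to-$\pi(u_i)$ and $u_{i+1}$-to-$\pi(u_{i+1})$ paths, the edge $u_i u_{i+1}$, and either of the two $C$-arcs between $\pi(u_i)$ and $\pi(u_{i+1})$, one obtains closed walks of lengths $r + r' + 1 + d_C$ and $r + r' + 1 + \ell - d_C$, exactly one of which is odd and must contain an odd cycle of length $\ge \ell$. Combined with the upper bound $d_C \le |B|-1 = \ell - 2t - 2$ from the first evaluation, a brief case analysis rules out horizontal edges ($r = r'$) with $d_C$ even, which would require $d_C \ge \ell - 2r - 1 > |B|-1$ since $r \le t$, and tree edges ($|r-r'|=1$) with $d_C$ odd, which would require $d_C \ge \ell - 2r - 2 > |B|-1$ since $\min(r, r') \le t - 1$. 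Hence each horizontal edge of $C'$ contributes an odd summand to $S$ and each tree edge an even summand, giving $S \equiv \#\{\text{horizontal edges of } C'\} \pmod 2$. Since the $\{-1,0,+1\}$-valued $\rho$-changes around $C'$ sum to zero over $k$ (odd) edges, the number of horizontal edges is odd, so $S \equiv 1 \pmod 2$ --- contradicting the first evaluation.

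The crux is the case analysis above: one must combine the upper bound $d_C \le |B|-1$ (from projections landing in $B$) with the odd-girth lower bound to defeat the otherwise-allowed parities. The hypothesis $t \le (g-1)/2$ plays a dual role, ensuring $2t+1 \le \ell$ so that the through-$A$ arc always exceeds $2t+1$ (needed in the first evaluation) and, together with $\rho \le t$, providing the strict inequalities $\ell - 2r - 1 > |B|-1$ and $\ell - 2r - 2 > |B|-1$ (needed in the second).
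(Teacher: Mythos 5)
Your proof is correct, but it follows a genuinely different route from the paper's. The paper argues constructively: it fixes a shortest-path forest $F$ from all vertices to $C$, observes that $(F\cup C)-z$ is a forest and hence $2$-colorable, and then verifies that this $2$-coloring is proper on $G-R$ by examining, for each alleged monochromatic edge $uv$, the unique forest path $P$ between $u$ and $v$: the odd cycle $P+uv$ must have length at least $|C|$, which forces $f(u)$ or $f(v)$ to lie within distance $t$ of $z$ on $C$, contradicting $u,v\notin R$. You instead refute a hypothetical odd cycle in $G-R$ globally, by projecting its vertices to nearest points of $C$ (which, as you note, must land in the far arc $B$) and double-counting $\sum_i d_C(\pi(u_i),\pi(u_{i+1}))$ modulo $2$: the telescoping of indices inside $B$ gives parity $0$, while your per-edge case analysis (odd closed walks through $C$ must have length at least $\ell$, combined with $d_C\le|B|-1$) shows each level-preserving edge contributes an odd amount and each level-changing edge an even amount, giving parity $1$ since an odd cycle has an odd number of level-preserving edges. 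Both arguments rest on the same two structural facts --- $C$ is geodesic, and vertices outside $R$ reach $C$ only far from $z$ --- but the paper's forest-coloring argument is shorter and yields the bipartition of $G-R$ explicitly, whereas your parity double count trades that for a self-contained contradiction with a somewhat longer case analysis; both are valid, and for the algorithmic use in Lemma~\ref{lemma-noshort} the non-constructive form is harmless since a bipartition of a bipartite graph is computable in linear time anyway.
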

\begin{proof}
Note that $C$ is geodesic in $G$, i.e., the distance between any two vertices of $C$ is the same in $C$ as in $G$
(otherwise, $G$ would contain a path $Q$ between two vertices $x$ and $y$ of $C$ shorter than the distance between $x$ and $y$ in $C$,
and $C\cup Q$ would contain an odd cycle shorter than $C$), and in particular $C$ is an induced cycle.

Let $F$ be a forest of shortest paths from vertices in $G$ to $V(C)$, and for each $v\in V(G)$,
let $f(v)$ denote the vertex in which the component of $F$ containing $v$ intersects $C$.
Note that for each $v\in V(G)$, the distance in $F$ from $v$ to $f(v)$ is at most $t$.
Observe that $(F\cup C)-z$ is also a forest, and thus it has a proper
$2$-coloring $\psi$.  We claim that the restriction of $\psi$ is a proper $2$-coloring of $G-R$.
Suppose for a contradiction there exists an edge $uv\in E(G-R)$ with $\psi(u)=\psi(v)$.
Since $u,v\not\in R$, we have $f(u)\neq z\neq f(v)$, and thus there exists a unique path $P$ between $u$ and $v$ in $(F\cup C)-z$.
Since $\psi(u)=\psi(v)$, the cycle $P+uv$ has odd length, and since $C$ is a shortest odd cycle in $G$, we have
$|E(P)|+1\ge |C|\ge g\ge 2t+1$.
In particular, $P$ contains both $f(u)$ and $f(v)$. Let $P'$ be the subpath of $P$ between $f(u)$ and $f(v)$; we have
$|E(P')|\ge |E(P)|-2t\ge |C|-2t-1$.  Since $P'$ is a subpath of the path $C-z$ and $|E(C-z)|=|C|-2$,
we can by symmetry assume that the distance between $f(u)$ and $z$ is at most $t$, and thus $f(u)\in A$.
But then $u\in R$, which is a contradiction.
\end{proof}

Note that if $|C|\gg t$, then the above lemma can be used to obtain many disjoint
sets whose removal makes the graph bipartite, and thus one of them must contain
only a small fraction of vertices, or, in a weighted setting, contain only a small fraction
of the total weight.  We now use this observation to decrease the induced odd cycle packing number
without decreasing the weight of the heaviest independent set too much.
Given an assignment $w:V(G)\to\mathbb{Z}^+$ of weights to vertices of $G$, let for each set $X\subseteq V(G)$
define $w(X)=\sum_{v\in X} w(v)$ and let $\alpha_w(G)$ be the maximum of $w(X)$ over all independent sets $X$ in $G$.

\begin{lemma}\label{lemma-decrio}
There exists an algorithm that, given an integer $b\ge 1$, an $n$-vertex non-bipartite graph $G$ of odd girth at least $2b(8b-3)$,
and an assignment $w:V(G)\to\mathbb{Z}^+$ of weights to vertices, returns in time $O(bn^2+n^3)$ induced subgraphs $G_1$, \ldots, $G_{2b}$
of $G$ such that $\iocp(G_i)\le \iocp(G)-1$ for $i\in\{1,\ldots, 2b\}$, and
\begin{align*}
\max\{\alpha_w(G_i):i\in\{1,\ldots,2b\}\}&\ge (1-1/b)\alpha_w(G)\text{, and}\\
\max\{w(V(G_i)):i\in\{1,\ldots,2b\}\}&\ge (1-1/b)w(V(G)).
\end{align*}
\end{lemma}
\begin{proof}
We find (in time $O(n^3)$ by BFS from each vertex) a shortest odd cycle $C$ in $G$,
necessarily of length at least $2b(8b-3)$.  The argument from the proof of Lemma~\ref{lemma-short-nbr} shows that
$C$ is geodesic in $G$.  Let $Z=\{z_1,\ldots, z_{2b}\}$ be a set of vertices of $C$ at distance at least $8b-3$
from one another and for $i\in\{1,\ldots,2b\}$, let $R_i=N_{4b-2}[z_i]$ denote the set of vertices of $G$ at distance at most $4b-2$ from $z_i$
(the choice of $z_1$, \ldots, $z_{2b}$ implies these sets are pairwise disjoint).
Let $L_i$ denote the set of vertices of $G$ at distance exactly $i$ from $C$.  Let $G_i=G-L_i-R_i$.

Consider any $i\in\{1,\ldots,2b\}$.  We claim that $\iocp(G_i)\le \iocp(G)-1$.  Indeed, let $G_1$ be the subgraph of $G$ induced
by vertices at distance less than $i$ from $C$
and $G_2$ the subgraph induced by vertices at distance greater than $i$ from $C$, so that $G-L_i$ is the disjoint union of $G_1$ and $G_2$.
Since $G_1$ contains the odd cycle $C$ whose vertices have no neighbors in $G_2$, we have $\iocp(G_2)\le \iocp(G-L_i)-1\le \iocp(G)-1$.
Furthermore, by Lemma~\ref{lemma-short-nbr} (for $t = 4b-2$), the graph $G_1-R_i$ is bipartite, and thus
$\iocp(G_i)=\iocp(G_1-R_i)+\iocp(G_2-R_i)\le \iocp(G)-1$.

Consider a heaviest independent set $I$ in $G$.  Since the sets $L_1$, \ldots, $L_{2b}$
are pairwise disjoint, and the sets $R_1$, \ldots, $R_{2b}$ are pairwise disjoint as well, we have
$$\sum_{i=1}^{2b} w(I\cap (L_i\cup R_i))\le 2w(I),$$ and thus there exists $i\in \{1,\ldots, 2b\}$ such that $w(I\cap (L_i\cup R_i))\le w(I)/b$.
Hence, $\alpha_w(G_i)\ge (1-1/b) \alpha_w(G)$.  Similarly,
$$\sum_{j=1}^{2b} w(L_j\cup R_j)\le 2w(V(G)),$$
and thus there exists $j\in\{1,\ldots,2b\}$ such that $w(L_j\cup R_j)\le w(V(G))/b$ and $w(V(G_j))\ge (1-1/b)w(V(G))$.
\end{proof}

Let us remark on a lower bound on $\alpha_w(G)$ in graphs satisfying the assumptions of Lemma~\ref{lemma-decrio}.
\begin{corollary}\label{cor-lower}
Let $k\ge 0$ and $b\ge 1$ be integers and let $G$ be a graph of induced odd cycle packing number at most $k$ and odd girth at least $2b(8b-3)$.
For any assignment $w:V(G)\to\mathbb{Z}^+$ of weights to vertices, we have
$$\alpha_w(G)\ge \frac{(1-1/b)^k w(V(G))}{2}\ge \frac{(1-k/b)w(V(G))}{2}.$$
\end{corollary}
\begin{proof}
We prove the claim by induction on $k$.  If $G$ is bipartite, then one of the parts of bipartition of $G$ has weight at least $w(V(G))/2$.
Hence, we can assume that $G$ is not bipartite, and thus $k\ge 1$.  By Lemma~\ref{lemma-decrio}, there exists an induced subgraph $G'$
of $G$ of induced odd cycle packing number at most $k-1$ such that $w(V(G'))\ge (1-1/b)w(V(G))$, and by the induction hypothesis,
$$\alpha_w(G)\ge \alpha_w(G')\ge \frac{(1-1/b)^{k-1} w(V(G'))}{2}\ge \frac{(1-1/b)^k w(V(G))}{2}.$$
\end{proof}

Similarly, iterating Lemma~\ref{lemma-decrio}, we obtain an approximation for the maximum-weight independent set
in graphs of bounded induced odd cycle packing number.

\begin{lemma}\label{lemma-noshort}
There exists an algorithm that, given integers $k\ge 0$ and $b\ge 1$, an $n$-vertex graph $G$
of induced odd cycle packing number at most $k$ and odd girth at least $2b(8b-3)$,
and an assignment $w:V(G)\to\mathbb{Z}^+$ of weights to vertices,
returns in time $O((2b)^kn^3)$ an independent set $X\subseteq V(G)$ such that $$w(X)\ge (1-1/b)^k\alpha_w(G)\ge (1-k/b)\alpha_w(G).$$
\end{lemma}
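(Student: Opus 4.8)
The plan is to prove both assertions together by induction on $k$, with the inductive step a single decomposition of $G$ around a shortest odd cycle. The base case is $k=0$ (or, more generally, $G$ bipartite): here one computes a maximum-weight independent set of a bipartite graph exactly in time $O(n^3)$ via the standard minimum-cut formulation, and its heavier side has weight at least $w(V(G))/2$; I also dispose of $k>b$ at once, since then $(1-k/b)\le 0$ and both assertions are vacuous. So assume $1\le k\le b$ and $G$ not bipartite, and let $C$ be a shortest odd cycle of $G$, which is then induced with $|C|\ge 2b(8b-3)$. I would order the vertices by distance to $C$ and, for a scale $a\in\{0,1,\dots,2b-1\}$ to be selected, let $N_a$, $L_{a+1}$, $F_{a+1}$ be the vertices at distance $\le a$, exactly $a+1$, and $\ge a+2$ from $C$; these partition $V(G)$ with no edges between $N_a$ and $F_{a+1}$. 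On the far side $G[F_{a+1}]$ has odd girth $\ge 2b(8b-3)$, and since its vertices lie at distance $\ge 2$ from $C$, any family of pairwise disjoint induced odd cycles of $G[F_{a+1}]$ extends by $C$ to such a family in $G$; hence $\iocp(G[F_{a+1}])\le k-1$ and the induction hypothesis applies. On the near side $P:=G[N_a]$ has $C$ as a shortest odd cycle, every vertex within distance $a$ of $C$, and $a\le(|C|-1)/2$ with room to spare, so Lemma~\ref{lemma-short-nbr} applies: for each $z\in V(C)$, with $A_{z,a}$ the set of vertices of $C$ within distance $a$ of $z$ and $R_{z,a}$ the vertices of $P$ within distance $a$ of $A_{z,a}$, the graph $P-R_{z,a}$ is bipartite.

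Next I would choose $a$ and $z$ to make the discarded weight small. Using that $C$ is geodesic in $G$, a short arc-counting argument shows each vertex of $N_a$ lies in $R_{z,a}$ for at most $4a+1$ of the $|C|$ choices of $z$; since $a\le 2b-1$ this is at most $8b-3$, which is exactly why the odd-girth threshold is $2b(8b-3)$. Thus for any nonnegative weighting $\mu$ on $V(G)$, averaging over $z$ gives $\min_z\mu(R_{z,a})\le\frac{4a+1}{|C|}\mu(N_a)\le\frac1{2b}\mu(N_a)$; and since $L_1,\dots,L_{2b}$ are disjoint, averaging over $a$ as well produces a pair $(a,z)$ with $\mu(R_{z,a})+\mu(L_{a+1})\le\frac1b\mu(V(G))$. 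For the ``furthermore'' part I would apply this with $\mu=w$. For the algorithmic part the decisive point is to apply it with $\mu(\,\cdot\,)=w(\,\cdot\,\cap X^\ast)$ for a fixed maximum-weight independent set $X^\ast$ of $G$: then the quantity controlled is $w\big((R_{z,a}\cup L_{a+1})\cap X^\ast\big)$ and the bound is $\frac1b w(X^\ast)=\frac1b\alpha_w(G)$, so the loss becomes a fraction of $\alpha_w(G)$ rather than of the possibly far larger $w(V(G))$.

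To conclude the ``furthermore'' part I would take the heavier side of $P-R_{z,a}$, of weight $\ge\tfrac12\big(w(N_a)-w(R_{z,a})\big)$, together with the independent set of $G[F_{a+1}]$ given by induction, of weight $\ge\tfrac12\big(1-\tfrac{k-1}{b}\big)w(F_{a+1})$; these are nonadjacent, and since $w(V(G))=w(N_a)+w(L_{a+1})+w(F_{a+1})$, the loss bound turns a short computation into total weight $\ge\tfrac12\big(1-\tfrac{k}{b}\big)w(V(G))$. For the algorithm: on a bipartite input return an exact optimum; otherwise find a shortest odd cycle $C$, and for each of the $2b$ scales $a$ form $G[F_{a+1}]$ and recurse on it with parameter $k-1$, form for the best center $z$ a maximum-weight independent set of $P-R_{z,a}$, unite the two, and output the heaviest of the $2b$ candidates. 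With $(a,z)$ as furnished above and $X^\ast$ optimal, the bipartite side contributes $\ge w(X^\ast\cap N_a)-w(X^\ast\cap R_{z,a})$ (restrict $X^\ast$ to $N_a\setminus R_{z,a}$), the recursive side $\ge(1-\tfrac{k-1}{b})\,w(X^\ast\cap F_{a+1})$ by induction, and summing these and invoking $w(X^\ast)=w(X^\ast\cap N_a)+w(X^\ast\cap L_{a+1})+w(X^\ast\cap F_{a+1})$, $0\le 1-\tfrac{k-1}{b}\le 1$, and the loss bound yields total $\ge(1-\tfrac{k}{b})\alpha_w(G)$. Since each of the $k$ levels branches $2b$ ways and does $O(n^3)$ work per node (computing distances, a few bipartite maximum-weight independent set computations, choosing the center), the running time is $O((2b)^kn^3)$.

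I expect the routine parts to be the bookkeeping — that distances in $P$ agree with those in $G$ for the relevant vertices, the precise verification of the hypotheses of Lemma~\ref{lemma-short-nbr}, the $4a+1$ arc count, and implementing the center selection without an extra $n$ factor — and the genuine obstacle to be the weight accounting in the algorithmic bound. The naive route of removing $k$ balls-plus-layers and then solving the residual bipartite graph exactly loses a $\Theta(k/b)$-fraction of $w(V(G))$, which may be nowhere near a $\Theta(k/b)$-fraction of $\alpha_w(G)$; the fix is to retain the recursion (so the far side returns a $(1-\tfrac{k-1}{b})$-fraction of \emph{its} own optimum, not merely half of its weight) and to measure every discarded set against a fixed optimal independent set, after which averaging over both the scale $a$ and the center $z$ — with the threshold $2b(8b-3)$ tuned precisely so that $4a+1\le 8b-3$ for all admissible $a$ — keeps the per-level loss at most $\tfrac1b\alpha_w(G)$.
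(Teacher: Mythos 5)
Your decomposition is essentially the paper's: both arguments cut around a shortest odd cycle $C$, invoke Lemma~\ref{lemma-short-nbr} to make the near side bipartite after deleting a ball around a point of $C$, branch over $2b$ distance layers, and charge the deleted vertices against a fixed optimum $X^*$. Your ``furthermore'' computation and the weight accounting for the approximation guarantee are correct, and your $4a+1$ arc count, giving $\min_{z\in V(C)} w(X^*\cap R_{z,a})\le\tfrac{4a+1}{|C|}\,w(X^*\cap N_a)\le\tfrac{1}{2b}\,w(X^*)$, is a valid existential statement. The structural difference is that you solve the bipartite near side exactly at every level and recurse only on the far side, whereas the paper deletes $L_i\cup R_i$, observes $\iocp(G-L_i-R_i)\le k-1$, and recurses on the whole remainder with $k-1$, so bipartite maximum-weight independent set computations occur only at base cases.

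The genuine gap is the algorithmic selection of the center $z$, which you flag but dismiss as routine. Your averaging over all $|C|$ centers only proves that a good $z$ exists with respect to the measure $w(\cdot\cap X^*)$, which the algorithm cannot evaluate. A selection rule based on $w$ alone (e.g.\ minimizing $w(R_{z,a})$) gives no control over $w(X^*\cap R_{z,a})$, so the only reading of ``best center'' under which your correctness argument goes through is to compute the bipartite optimum of $P-R_{z,a}$ for every $z\in V(C)$ and keep the heaviest; that is $\Theta(n)$ maximum-flow computations per scale per node and inflates the running time to roughly $O((2b)^k b\,n^4)$, not the claimed $O((2b)^k n^3)$. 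The missing idea is precisely the paper's device: fix $2b$ centers $z_1,\ldots,z_{2b}$ on $C$ pairwise at distance at least $8b-3$, so the balls $R_i$ are pairwise disjoint; then $\sum_i w\bigl(X^*\cap(L_i\cup R_i)\bigr)\le 2w(X^*)$ already produces a good index among only $2b$ candidates, all of which the algorithm can afford to try. Even after this repair, performing an exact bipartite solve at every node for each of the $O(b)$ candidates yields $O((2b)^k b\,n^3)$ rather than the stated $O((2b)^k n^3)$; the paper avoids this extra factor by postponing all bipartite computations to the base case of the recursion.
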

\begin{proof}
We prove the claim by induction on $k$.  If $G$ is bipartite, then we can find an independent set in $G$ of
largest weight via a maximum flow algorithm in time $O(n^3)$;
and considering the heavier of the two color classes of $G$, we have $\alpha_w(G)\ge w(V(G))/2$.
Hence, we can assume that $G$ is not bipartite, and in particular $k\ge 1$.

Let $G_1$, \ldots, $G_{2b}$ be the induced subgraphs of $G$ of induced odd cycle packing number at most $k-1$ obtained
using the algorithm from Lemma~\ref{lemma-decrio}.  We now recurse on $G_1$, \ldots, $G_{2b}$ (with $k$ replaced by $k-1$)
obtaining by the induction hypothesis in time $2b\cdot O((2b)^{k-1}n^3)=O((2b)^kn^3)$ independent sets $X_1$, \ldots, $X_{2b}$
such that $w(X_i)\ge (1-1/b)^{k-1}\alpha_w(G_i)$ for each $i\in \{1,\ldots,2b\}$.  We return the heaviest of these independent
sets; we have
\begin{align*}
\max\{w(X_i):i\in\{1,\ldots,2b\}\}&\ge (1-1/b)^{k-1}\max\{\alpha_w(G_i):i\in\{1,\ldots,2b\}\}\\
&\ge (1-1/b)^{k-1}\cdot (1-1/b)\alpha_w(G)=(1-1/b)^k\alpha_w(G).
\end{align*}
\end{proof}

\section{EPTAS assuming linear independence number}\label{secEPTAS}

Let us now move on to the general case of graphs with bounded induced odd cycle packing number.
From now on, we work in unweighted setting.
In order to make use of Lemma~\ref{lemma-noshort}, it is necessary to destroy all short odd cycles in
the input graph graph. We show that unless simply deleting a maximum packing of such cycles erases
only a small portion of the graph (and hence yields a useful approximation) there exist useful structures. 
Either there exists a cycle which can be used to decrease $\iocp$ without decreasing the size of the largest independent set too much or there are many high-degree vertices which can be used to guess part of the solution while significantly reducing
the remainder of the graph. 

For a set $\SS$ of graphs, an \emph{$\SS$-packing} in a graph $G$ is a set $\XX$ of pairwise vertex-disjoint induced subgraphs of $G$, each isomorphic to a graph belonging to $\SS$
(note that we allow edges between members of $\XX$, unlike in the definition of $\iocp$).  Let $V(\XX)=\bigcup_{X\in \XX} V(X)$.
For an integer $g\ge 3$, let $\SS_g$ denote the set of all odd cycles of length less than $g$.
A maximal $\SS_g$-packing $\XX$ in an $n$-vertex graph $G$ can be found in time $O(n^4)$ by repeatedly finding a shortest induced odd cycle and deleting it from $G$;
observe that $G-V(\XX)$ has odd girth at least $g$.  
Lemma~\ref{lemma-noshort} is used to deal with graphs without odd cycles of length less than $g$; so, it remains to handle the graphs containing an $\SS_g$-packing
covering a large fraction of the vertices.

\begin{lemma}\label{lemma-decomp}
There exists an algorithm that, for input integers $k,p\ge 1$ (with $p\ge k$) and an $n$-vertex graph $G$ of induced odd cycle packing number at most $k$, returns in
time $O(n^4+(4p)^kn^3)$ an independent set $I$ and induced odd cycles $C_1$, \ldots, $C_m$ (for some $m\le n$) in $G$, such that at least one of the following claims holds:
\begin{itemize}
\item[(a)] $|I|\ge (1-k/p)\alpha(G)$, or
\item[(b)] there exists $i\in \{1,\ldots,m\}$ such that $\alpha(G-N(C_i))\ge(1-1/p)\alpha(G)$, or
\item[(c)] there are at least $\tfrac{k}{81920p^6}\alpha(G)$ vertices $v\in V(G)$ of degree at least $\tfrac{k}{81920p^6}n$ such that $\alpha(G-N(v)) = \alpha(G)$.
\end{itemize}
\end{lemma}
\begin{proof}
Let $g=4p(16p-3)$.  Let $\XX$ be a maximal $\SS_g$-packing in $G$.  Let $I$ be an independent set in $G-V(\XX)$ found using the algorithm from Lemma~\ref{lemma-noshort} with $b=2p$.
Let $\XX=\{C_1,\ldots, C_m\}$ and for $i\in \{1,\ldots,m\}$, let $H_i=G-N(C_i)$. 

Suppose first that $|V(\XX)|\le \tfrac{k}{10p}n$.  Since $\tfrac{k}{10p}\le 0.1$, Corollary~\ref{cor-lower} implies
$$\alpha(G)\ge\alpha(G-V(\XX))\ge (1-k/b)|V(G-V(\XX))|/2\ge 0.45(1-k/b)n\ge 0.225n,$$
and thus $|V(\XX)|\le \tfrac{k}{10p}n\le \tfrac{k}{2p}\alpha(G)$.
Consequently, $\alpha(G-V(\XX))\ge \alpha(G)-|V(\XX)|\ge \bigl(1-\tfrac{k}{2p}\bigr)\alpha(G)$, and
thus the set $I$ returned by the algorithm from Lemma~\ref{lemma-noshort} satisfies
$$|I|\ge \bigl(1-\tfrac{k}{2p}\bigr)\alpha(G-V(\XX))\ge (1-k/p)\alpha(G),$$
implying that (a) holds.

Hence, we can assume $|V(\XX)|>\tfrac{k}{10p}n$.  Let $J$ be a largest independent set in $G$.  If there exists $i\in \{1,\ldots,m\}$ such that $|N(C_i)\cap J|\le |J|/p$,
then (b) holds.  Hence, assume that $|N(C_i)\cap J|>|J|/p$ for every $i\in \{1,\ldots,m\}$; and consequently, for each such $i$, there exists $b_i\in V(C_i)$ such that
$|N(b_i)\cap J|>\tfrac{|J|}{pg}$.  Let $B=\{b_1,\ldots,b_m\}$ and note that $m\ge |V(\XX)|/g>\tfrac{k}{10pg}n$.  By double-counting the number of edges of $G$ between $B$ and $J$,
we have
\begin{align*}
\sum_{v\in J} \deg v&\ge\sum_{v\in J} |N(v)\cap B|=\sum_{b\in B} |N(b)\cap J|\\
&>\frac{m|J|}{pg}>\frac{k}{10p^2g^2}|J|n.
\end{align*}
Let $J'$ consist of vertices of $J$ of degree greater than $\tfrac{k}{20p^2g^2}n\ge \tfrac{k}{81920p^6}n$.
Since $J$ is a largest independent set in $G$, for every $v\in J$ we have $\alpha(G-N(v))=\alpha(G)$.  Note that
\begin{align*}
\sum_{v\in J'} \deg v&= \sum_{v\in J}  \deg v - \sum_{v\in J\setminus J'} \deg v\\
&> \frac{k}{10p^2g^2}|J|n-\tfrac{k}{20p^2g^2}n\cdot |J\setminus J'|\ge\frac{k}{20p^2g^2}|J|n,
\end{align*}
and since each vertex has degree less than $n$, we have $|J'|>\tfrac{k}{20p^2g^2}|J|=\tfrac{k}{20p^2g^2}\alpha(G)\ge \tfrac{k}{81920p^6}\alpha(G)$.
\end{proof}

The possible outcomes of Lemma \ref{lemma-decomp} offer a natural recursive
approximation algorithm. Supposing a suitable setting of the parameters,
if the case (a) holds, then the algorithm provides a good independent set. If (b)
holds, then we may restrict the problem to an induced subgraph of lower $\iocp$
while preserving the solution well enough. If (c) holds, then we are guaranteed
many vertices of large degree such that deleting their neighbors does not
affect the solution.  A random sampling of vertices with high degree
provides a good probability of hitting one of these vertices.
We show that this approach yields a randomized EPTAS under assumption that $\alpha(G)=\Omega(|V(G)|)$.

\begin{lemma}\label{lemma-main}
There exists a randomized algorithm that, for input integers $k\ge 0$ and $t\ge 1$ and an $n$-vertex graph $G$ of induced odd cycle packing number at most $k$, returns in
time $O_{k,t}(n^4)$ an independent set of $G$ whose size is at least $\alpha(G)-n/t$ with probability at least $\Omega_{k,t}(n^{-k})$.
\end{lemma}
\begin{proof}
Let $p=kt$, and when $k\neq 0$, let $q=\tfrac{1}{81920p^6}$ and $d\ge 1$ be the smallest integer such that $(1-q)^d<1/p$.
Let us now describe a recursive procedure that, applied to an induced subgraph $G'$ of $G$ and a non-negative integer $k'\le k$ such that $\iocp(G')\le k'$,
returns an independent set of $G'$; as we will show later, this set has size at least $\alpha(G')-nk'/p$ with probability $\Omega_{k,t}(n^{-k'})$.

If $|V(G')|\le nk'/p$, then we return an empty set (or any other independent set in $G'$).
If $k'=0$, then $G'$ is bipartite and we can find a largest independent set in $G'$ via a maximum flow algorithm in time $O(n^3)$.
Hence, suppose that $k'\ge 1$.  Apply the algorithm from Lemma~\ref{lemma-decomp} to $G'$ (using $k'$ as $k$) to obtain an independent set $I$ and cycles $C_1,\dots,C_m$.
Now, we at random perform one of the following actions, each with probability $1/3$:
\begin{itemize}
\item[(a)] Return the set $I$.
\item[(b)] Choose $i\in \{1,\ldots,m\}$ uniformly at random.  Note that the induced subgraph $H_i = G' - N[C_i]$ satisfies
$\iocp(H_i)\le k'-1$.  Let $I'$ be an independent set in $H_i$ obtained by a recursive call for $H_i$ with $k'$
replaced by $k'-1$.  Return the union of $I'$ with an independent set of $C_i$ of size $\alpha(C_i)=\lceil |C_i|/2\rceil$.
\item[(c)] Choose a vertex $u\in V(G')$ of degree at least $k'q|V(G')|$ uniformly at random (if no such vertex
exists, return an empty set, instead).  Return the independent set obtained by the recursive call for $G'-N(u)$, with the same $k'$.
\end{itemize}
Let us analyze the running time of this procedure when applied to $G$ with $k'=k$.  Note that at each level of the recursion,
we only perform one recursive call, and either $k'$ decreases by one, or the number of vertices decreases by the factor
smaller or equal to $1-\tfrac{1}{81920p^6}$.  Moreover, the recursion stops when the number of vertices is at most $n/p$ (or earlier).
Hence, the total depth of the recursion is at most $k+d$, and since $d$ only depends on $k$ and $t$, the running time
of the algorithm is $O_{k,t}(n^4)$.

Let $d(G')$ be the smallest non-negative integer such that $(1-q)^{d(G')}|V(G')|<n/p$; in particular, $d(G)=d$.
Let $a(G',k')=(3n)^{-k'}\bigl(\tfrac{q}{3p}\bigr)^{d(G')}$.
Let us now show that the set returned by the algorithm for $G'$ and $k'$ has size at least $\alpha(G')-nk'/p$ with probability
at least $a(G',k')$.  Note that when applied to $G'=G$ and $k'=k$, this implies the algorithm
returns an independent set of $G$ of size at least $\alpha(G)-n/t$ with probability $\Omega_{k,t}(n^{-k})$, as required.

We prove the claim by induction on $k'+|V(G')|$.  If $k'=0$, then we return an optimal independent set, and if $\alpha(G')\le nk'/p$, then the claim
is trivial.  Hence, we can assume that $k'>0$ and $\alpha(G')>nk'/p$.  Let us now distinguish which outcome of
Lemma~\ref{lemma-decomp} holds.
\begin{itemize}
\item If (a) holds, then with probability $1/3\ge a(G',1)\ge a(G',k')$, we return the independent set $I$, which has
size at least $(1-k'/p)\alpha(G')\ge \alpha(G')-nk'/p$.
\item If (b) holds, then with probability $\tfrac{1}{3m}\ge \tfrac{1}{3n}$,
the algorithm takes the branch (b) and chooses $i\in\{1,\ldots,m\}$ such that $\alpha(G'-N(C_i))\ge(1-1/p)\alpha(G')$.
By the induction hypothesis, with probability $a(H_i,k'-1)\ge a(G',k'-1)$, the independent set $I'$ returned by the
recursive call in $H_i$ has size at least $\alpha(H_i)-n(k'-1)/p$.  Hence, with probability at least
$\tfrac{1}{3n}\cdot a(G',k'-1)=a(G',k')$, we return 
an independent set of size at least $\alpha(C_i)+\alpha(H_i)-n(k'-1)/p=\alpha(G'-N(C_i))-n(k'-1)/p\ge \alpha(G')-n/p-n(k'-1)/p=\alpha(G')-nk'/p$.
\item If (c) holds, then with probability at least $\tfrac{k'q\alpha(G')}{3n}\ge \tfrac{(k')^2q}{3p}\ge \tfrac{q}{3p}$,
the algorithm takes the branch (c) and chooses $u\in V(G')$ such that $\alpha(G'-N(u)) = \alpha(G')$.
We return the independent set obtained by the recursive call on $G'-N(u)$, which has size
at least $(1-k'/p)\alpha(G'-N(u))=(1-k'/p)\alpha(G')$ with probability at least $a(G'-N(u),k')$ by the induction hypothesis.
Note that $d(G'-N(u))\le d(G')-1$, and thus the probability we return an independent set of size at least $(1-k'/p)\alpha(G')$
is at least $\bigl(\tfrac{q}{3p}\bigr)\cdot a(G'-N(u),k')\ge a(G',k')$.
\end{itemize}
\end{proof}

We now improve the probability by iterating.

\begin{theorem}\label{thm-main}
There exists a randomized algorithm that, for input integers
$k\ge 0$ and $t\ge 1$ and an $n$-vertex graph $G$ of induced odd cycle packing number at most $k$, returns in
time $O_{k,t}(n^{k+4})$ an independent set of $G$ whose size is at least $\alpha(G)-n/t$ with probability at least $1-1/e$.
\end{theorem}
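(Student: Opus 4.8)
The plan is to recurse on Lemma~\ref{lemma-decomp}, using an error parameter that shrinks whenever we branch along alternative~(b). First I would dispose of $k=0$: then $G$ has no odd cycle at all (a shortest odd cycle is always induced), so $G$ is bipartite and a maximum independent set is found exactly by network flow in time $O(n^3)$. For $k\ge 1$ I would fix $p=2kt$ (so $p\ge k\ge 1$, as Lemma~\ref{lemma-decomp} requires, and $k/p=1/(2t)$) and build a recursive routine that takes an induced subgraph $G'$ of $G$ with $\iocp(G')\le k'\le k$ and a rational $\varepsilon'\ge k'/p$ and returns an independent set of $G'$; on a base input ($k'=0$, so $G'$ is bipartite, or $|V(G')|$ below the constant $r$ fixed below) it returns a maximum one, and I would prove by induction on $|V(G')|$ that in general, conditioned on a collection of random events, the returned set has size at least $\alpha(G')-\varepsilon'\,|V(G')|$. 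Running this on $(G,k,1/t)$ and using $k/p\le 1/t$ and $|V(G)|=n$ then gives the theorem.

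On a non-base input the routine calls Lemma~\ref{lemma-decomp} with parameters $k'$, $p$, and a constant $r=r(k,t)$, obtaining $I$, the graphs $H_1,\dots,H_m$ (each with an odd-cycle component $C_i$), and $G_1,\dots,G_r$. For each $i$ put $H_i':=H_i-V(C_i)$: adding $C_i$ back to any induced odd cycle packing of $H_i'$ yields one of $H_i$, so $\iocp(H_i')\le \iocp(H_i)-1\le k'-1$ (using that $\iocp$ is monotone under induced subgraphs), while $\alpha(H_i)=\alpha(H_i')+\alpha(C_i)$ with $\alpha(C_i)=(|V(C_i)|-1)/2$ realized by an obvious set. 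The routine recurses on $(H_i',k'-1,\varepsilon'-1/p)$ for each $i$ and on $(G_j,k',\varepsilon')$ for each $j$ with $|V(G_j)|<(1-\tfrac{k'}{81920p^6})|V(G')|$, and returns the largest of $I$, the sets $I_i'\cup(\text{maximum independent set of }C_i)$, and the sets from the $G_j$-recursions; these are all independent in $G'$. Verifying the size bound is a short case analysis on which of (a), (b), (c) holds: (a) is immediate from $k'/p\le\varepsilon'$ and $\alpha(G')\le|V(G')|$; in (b) the recursion on the right $H_i'$ loses at most $(\varepsilon'-1/p)|V(G')|$, the cycle adds $\alpha(C_i)$, and the factor $1-1/p$ in $\alpha(H_i)\ge(1-1/p)\alpha(G')$ absorbs the rest; in (c) the promised $G_j$ has $\alpha(G_j)=\alpha(G')$, so only $\varepsilon'|V(G_j)|<\varepsilon'|V(G')|$ is lost. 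The invariant $\varepsilon'\ge k'/p$ survives a (b)-step since $\varepsilon'-1/p\ge(k'-1)/p$, and $p\ge kt$ keeps it along every descent.

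The crux --- and the step I expect to be the main obstacle --- is making the (c)-recursion cheap enough for total time $O_{k,t}(n^{k+4})$; a naive bound gives depth $\Theta(\log n)$ along (c)-branches, which would blow up the tree or force polynomially many repetitions. The key point is that in the only interesting case, $\alpha(G)\ge n/t$ (otherwise even $\emptyset$ meets the bound), this depth is actually $O_{k,t}(1)$. Indeed, follow the ``good'' path in the recursion tree (to the correct $H_i'$ at a (b)-node, to the promised $G_j$ at a (c)-node): along it $\alpha$ of the current graph is unchanged by (c)-steps and multiplied by at least $1-1/p$ at each of the at most $k$ (b)-steps, so it stays at least $(1-k/p)\alpha(G)\ge\tfrac{1}{2}\alpha(G)\ge\tfrac{n}{2t}$, hence the current graph always has at least $n/(2t)$ vertices; since each (c)-step scales the vertex count by a factor at most $1-\tfrac{1}{81920p^6}$, at most $\Lambda:=O_{k,t}(1)$ of them occur on the good path. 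So I would cap (c)-branches at $\Lambda$: when $\alpha(G)\ge n/t$ this never affects the good path, while unconditionally it keeps the recursion tree to $O_{k,t}(n^k)$ nodes ($\le n$ children per (b)-node, $\le r$ per (c)-node, depth $\le k+\Lambda$), each costing $O(n^4+(4p)^{k'}n^3+rn^2)=O_{k,t}(n^4)$, for $O_{k,t}(n^{k+4})$ overall. For the probability, each (c)-call on the good path fails with probability at most $q\bigl(|V(G')|,\,r,\,\tfrac{1}{81920p^6}\cdot\tfrac{n}{2t}\bigr)\le\bigl(1-\tfrac{1}{163840 t p^6}\bigr)^r$ (using $\alpha(G')\ge n/(2t)\ge |V(G')|/(2t)$ and that $q$ is nonincreasing in its third argument), which is $\le\tfrac{1}{3\Lambda}$ once $r=O_{k,t}(1)$ is large enough; a union bound over the $\le\Lambda$ such calls then gives success probability $\ge 2/3>1/2$.
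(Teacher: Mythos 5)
Your proposal is correct and follows essentially the same route as the paper's proof: recurse via Lemma~\ref{lemma-decomp}, decreasing $k'$ along the $(b)$-branches, bounding the number of $(c)$-steps on the ``good'' path by an $O_{k,t}(1)$ constant (the paper does this by measuring the error against the original $n$ and stopping once $|V(G')|\le nk'/p$, which makes your explicit cap implicit), and finishing with a union bound over the at most constantly many conditionally correct branches. One small slip: along the good path a $(b)$-step passes from $G'$ to $H_i-V(C_i)$, so $\alpha$ is not just multiplied by $1-1/p$ but also reduced by $\alpha(C_i)$; since $C_i\in\SS_g$ has length less than $g=O_{k,t}(1)$, this only costs an additive $O_{k,t}(1)$ over the at most $k$ such steps and is absorbed for $n$ large (small $n$ being brute-forceable), so the argument stands.
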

\begin{proof}
Let $q=\Omega_{k,t}(n^{-k})$ be the lower bound on the probability that the algorithm from Lemma~\ref{lemma-main} succeeds.
Run the algorithm $\lceil 1/q\rceil$ times (with independent random choices), and return the largest of the obtained independent sets.
The probability that none of them has size at least $\alpha(G)-n/t$ is at most $(1-q)^{1/q}\le e^{-1}$,
and thus the algorithm succeeds with probability at least $1-1/e$.
\end{proof}

Of course, we can further iterate the algorithm $a$ times, reducing the probability of a result worse than $\alpha(G)-n/t$ to
at most $e^{-a}$.

\section{$\chi$-boundedness}\label{secChiBound}

In this section we show that the classes of graphs with bounded induced odd packing number are $\chi$-bounded, that is, their chromatic number is bounded
by a function of their maximum clique size.  Let us start with the triangle-free case, where we need to show an absolute bound
on the chromatic number.
\begin{lemma}\label{lemma-trfree}
Every triangle-free graph $G$ satisfies $\chi(G)\le 2+5\iocp(G)$.  Furthermore, if $G$ has odd girth at least $7$,
then $\chi(G)\le 2+4\iocp(G)$, and if $G$ has girth at least $7$, then $\chi(G)\le 3+\iocp(G)$.
\end{lemma}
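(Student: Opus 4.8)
The plan is to induct on $k=\iocp(G)$, at each step using one induced odd cycle to peel off a piece of bounded chromatic number while decreasing $k$. The base case $k=0$ is immediate: a shortest odd cycle is always induced, so $\iocp(G)=0$ forces $G$ bipartite and $\chi(G)\le 2$. For the inductive step I would first record the key reduction: for \emph{any} induced odd cycle $C'$ of $G$ one has $\iocp(G-N[C'])\le \iocp(G)-1$, since the cycles of an induced odd cycle packing of $G-N[C']$ avoid $N[C']$ and hence have no edge to $C'$, so adjoining $C'$ produces a strictly larger induced packing of $G$. Taking $C$ to be a shortest odd cycle, it then suffices to bound $\chi(G[N[C]])$: combining disjoint palettes with the inductive hypothesis on $G-N[C]$ gives $\chi(G)\le \chi(G[N[C]])+\chi(G-N[C])\le \chi(G[N[C]])+\bigl(2+5(k-1)\bigr)$, so $\chi(G[N[C]])\le 5$ yields $\chi(G)\le 2+5k$, and $\chi(G[N[C]])\le 4$ together with the analogous inductive bound yields $\chi(G)\le 2+4k$ under odd girth at least $7$.

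To bound $\chi(G[N[C]])$ I would apply Lemma~\ref{lemma-short-nbr} to $H:=G[N[C]]$ with $t=1$; this is legitimate because $H$ has odd girth at least that of $G$, $C$ remains a shortest odd cycle of $H$, and every vertex of $H$ lies within distance $1$ of $C$ in $H$. For a vertex $z$ of $C$ with $C$-neighbours $z^-,z^+$, the lemma gives that $H-R$ is bipartite, where $R=N_H[\{z^-,z,z^+\}]$. As $z^-,z^+\in N_H(z)$ and $z\in N_H(z^-)$, in fact $R=N_H(z^-)\cup N_H(z)\cup N_H(z^+)$, and each open neighbourhood here is independent since $G$ is triangle-free; hence $R$ is covered by three independent sets, $\chi(H[R])\le 3$, and $\chi(H)\le 3+2=5$. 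When $G$ has odd girth at least $7$ I would improve this to $\chi(H[R])\le 2$ by showing $R$ contains no odd cycle at all: a shortest odd cycle $D$ of $G[R]$ is induced and has length at least $7$, but any two vertices of $D$ with a common neighbour among $z^-,z,z^+$ are at $D$-distance exactly $2$ (an odd distance would give a shorter odd cycle through that neighbour via the short $D$-arc, and a distance of at least $4$ would do so via the long arc), so each of $N[z^-],N[z],N[z^+]$ meets $V(D)$ in at most three vertices, and a short case analysis according to which of $z^-,z,z^+$ lie on $D$ — exploiting that these three vertices are themselves mutually adjacent or at distance $2$ — forces $|V(D)|\le 6$, a contradiction. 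This gives the first two bounds.

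The girth-$\ge 7$ bound $\chi(G)\le 3+\iocp(G)$ cannot be reached this way, because $G[N[C]]$ always contains $C$ and so has chromatic number at least $3$: removing $N[C]$ would cost two colours per unit of $\iocp$ rather than one. Instead the plan is to decrease $\iocp$ by deleting only an \emph{independent} set $S$, so that $\chi(G)\le 1+\chi(G-S)\le 1+\bigl(3+(\iocp(G)-1)\bigr)=3+\iocp(G)$ (the constant $3$ absorbing the small-$\iocp$ base cases). Producing such an $S$ with $\iocp(G-S)\le\iocp(G)-1$ is, I expect, the real obstacle. Girth at least $7$ makes the vicinity of a shortest odd cycle $C$ very rigid — every vertex outside $C$ with a neighbour on $C$ has a \emph{unique} such neighbour, two of them are adjacent only if their attachment points are far apart on $C$, and balls of radius $2$ induce trees — and I would try to use this rigidity to extract, from a bounded-radius ball around $C$, an independent set meeting every maximum induced odd cycle packing (equivalently, to colour $N[C]$ so that its interface with $G-N[C]$ uses a single new colour while $C$ recycles the colours of the recursively coloured part). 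The case where $|C|$ is large, so that the first neighbourhood layer need not be independent, is where I would expect to have to work hardest, presumably by operating inside the radius-$2$ ball (which is almost a tree) rather than the radius-$1$ ball.
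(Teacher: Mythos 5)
Your approach for the first two bounds is essentially the paper's: take a shortest odd cycle $C$, peel off $N[V(C)]$, apply Lemma~\ref{lemma-short-nbr} with $t=1$ to $G[N[V(C)]]$ to see that everything outside $R=N[\{z_1,z_2,z_3\}]$ is bipartite, and cover $R$ by three independent open neighbourhoods (triangle-freeness), giving the $+5$ increment. For the odd-girth-$\ge 7$ improvement you reach the same conclusion ($G[R]$ bipartite) but by a longer route, excluding odd cycles in $G[R]$ via a distance count in a putative shortest odd cycle $D$; the paper instead observes directly that $N(\{z_1,z_3\})$ is itself independent (an edge in it, say between neighbours of $z_1$ and $z_3$, would close a $5$-cycle through $z_1 z_2 z_3$, and an edge between two neighbours of the same $z_i$ is a triangle), so $R=N(z_2)\cup N(\{z_1,z_3\})$ is covered by two independent sets outright. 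Your route is workable but the ``short case analysis'' is not quite as short as you suggest: the crude bound of at most three vertices of $D$ per closed neighbourhood gives only $|V(D)|\le 9$, and even in the single case ``only $z$ lies on $D$'' the naive count is $3+2+2=7$, which does not exclude $|D|=7$; you have to further use that $N(z^-)\cap V(D)$ and $N(z^+)\cap V(D)$ are contained in $\{z\}\cup\{$the two vertices at $D$-distance $2$ from $z\}$ and cannot contain both of those two vertices.

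The genuine gap is the girth-$\ge 7$ bound $\chi(G)\le 3+\iocp(G)$. You correctly observe that disjoint palettes cannot do better than $+2$ per level (since $G[N[V(C)]]\supseteq C$ has chromatic number at least $3$), and you then propose to delete an independent set $S$ with $\iocp(G-S)\le\iocp(G)-1$ — but you cannot produce such an $S$, and in fact this is not what the paper does. The paper still deletes all of $N[V(C)]$; the missing idea is to use \emph{non}-disjoint palettes. When the girth is at least $7$, the open neighbourhood $N(V(C))$ is an independent set: if $w,z\in N(V(C))$ were adjacent, with respective $C$-neighbours $w',z'$, then girth $\ge 7$ forces the $C$-distance between $w'$ and $z'$ to exceed $3$ (otherwise the arc plus $w'wzz'$ closes a cycle of length at most $6$), and then one of the two cycles formed by $w'wzz'$ with the two arcs of $C$ is an odd cycle shorter than $C$, a contradiction. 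Since $C$ has no edges to $G-N[V(C)]$, the cycle $C$ can simply reuse three of the colours already present on $G-N[V(C)]$ (when $\iocp(G)=1$ the recursive palette has only two colours, which is exactly why the base constant is $3$ rather than $2$), and only the independent set $N(V(C))$ needs one new colour — hence $+1$ per level. This is the key observation your proposal lacks.
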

\begin{proof}
We prove the claim by induction on the induced odd cycle packing number.  If $\iocp(G)=0$, then $G$ is bipartite and $\chi(G)\le 2$,
hence suppose that $\iocp(G)>0$ and the claim holds for all graphs with smaller induced odd cycle packing number.  Let $C$ be a shortest
odd cycle in $G$, which is necessarily induced.  Since $\iocp(G-N[V(C)])\le \iocp(G)-1$, we can color
$G-N[V(C)]$ by the induction hypothesis, and it suffices to show how to color $G[N[V(C)]]$ using at most $5$ extra colors,
or at most $4$ extra colors if $G$ has odd girth at least $7$, or using at most one extra color if $G$ has girth at least $7$.

Let $A=\{z_1,z_2,z_3\}$ be a set consisting of three consecutive vertices of $C$ and let $R=N[A]$.
By Lemma~\ref{lemma-short-nbr}, $G[N[V(C)]\setminus R]$ is bipartite.  Since $G$ is triangle-free, the neighborhood of any vertex is
an independent set, and $N(\{z_1,z_3\})$ is disjoint from $N(z_2)$.  
Hence, we can use two new colors to color $G[N[V(C)]\setminus R]$, one new color for $N(z_2)$, one new color for $N(z_1)$, and
one new color for $N(z_3)\setminus N(z_1)$, using five extra colors in total.

Moreover, if $G$ has odd girth at least $7$, then $N(\{z_1,z_3\})$ is an
independent set as well; hence, we can use two new colors to color
$G[N[V(C)]\setminus R]$, one new color for $N(z_2)$ (which includes $z_1$ and
$z_3$) and one new color for $N(\{z_1,z_3\})$ (which includes $z_2$), using
only four extra colors in total.

In the case that $G$ has girth at least $7$, we claim that $N(V(C))$ is an independent set.
Indeed, suppose for a contradiction 
vertices $w,z\in N(C)$ are adjacent, and let $w'$ and $z'$ be neighbors of $w$ and $z$ in $C$, respectively.
Since $G$ has girth at least $7$, the distance between $w'$ and $z'$ in $C$
is greater than three.  
However, then $C+w'wzz'$ contains an odd cycle shorter than $C$, which is a contradiction.
Hence, we can use three of the colors used on $G-N[V(C)]$ to color $C$ and one extra color
to color $N(V(C))$.  Let us remark that we indeed obtain an upper bound of $\chi(G)\le 3+\iocp(G)$ rather than
$\chi(G)\le 2+\iocp(G)$: While the latter bound works for $\iocp(G)=0$, we need an extra color
when $\iocp(G)=1$.
\end{proof}

For the case when $G$ contains triangles, let $f(0,\omega)=2$ for every positive integer $\omega$, and for $k\ge 1$,
let us inductively define $f(k,\omega)=\omega+(2+5k)\binom{\omega}{2}+f(k-1,\omega)\binom{\omega}{3}$.
\begin{theorem}\label{thm-chi}
Every graph $G$ satisfies $\chi(G)\le f(\iocp(G),\omega(G))$.
\end{theorem}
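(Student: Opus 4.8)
The plan is to prove the bound by induction on $k=\iocp(G)$, calling on the absolute triangle-free estimate of Lemma~\ref{lemma-trfree} at each level. If $k=0$ then $G$ is bipartite, so $\chi(G)\le 2=f(0,\omega(G))$. If $k\ge 1$, set $\omega=\omega(G)$; since $G$ contains an odd cycle it has an edge, so $\omega\ge 2$, and if $\omega=2$ then $G$ is triangle-free and Lemma~\ref{lemma-trfree} gives $\chi(G)\le 2+5k\le 4+5k=f(k,2)$. Hence we may assume $\omega\ge 3$, and we fix a maximum clique $Q$ of $G$, so $|Q|=\omega$.

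The decomposition I would use is the following. For nonempty $S\subseteq Q$ let $V_S=\{v\in V(G)\setminus Q:\ N(v)\cap Q=S\}$; adjoining $S$ to a clique of $G[V_S]$ shows $G[V_S]$ contains no clique of size $\omega-|S|+1$, so $V_S$ is independent when $|S|=\omega-1$, $G[V_S]$ is triangle-free when $|S|=\omega-2$, and $V_Q=\emptyset$ by maximality of $Q$. Put $W=Q\cup\bigcup\{V_S:\ |S|\ge\omega-2\}$, that is, $Q$ together with all vertices outside $Q$ having at least $\omega-2$ neighbours in $Q$. Then every $v\in V(G)\setminus W$ has $|N(v)\cap Q|\le\omega-3$, so $Q\setminus N(v)$ contains a triangle $\tau$ of $Q$ with $v\notin N[\tau]$; thus $V(G)\setminus W$ is covered by the $\binom{\omega}{3}$ sets $U_\tau=V(G)\setminus N[\tau]$, $\tau$ ranging over the triangles contained in $Q$.

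Next I would verify two colouring bounds and glue. First, $\chi(G[W])\le\omega+(2+5k)\binom{\omega}{2}$: colour the clique $Q$ with $\omega$ colours; colour each independent set $V_{Q\setminus\{x\}}$, $x\in Q$, entirely with the colour of $x$ (admissible, since its vertices avoid $x$ but see the rest of $Q$, and distinct $x$ receive distinct colours); and for each of the $\binom{\omega}{2}$ pairs $\{x,y\}\subseteq Q$ colour the triangle-free graph $G[V_{Q\setminus\{x,y\}}]$ with a private palette of at most $2+5\iocp(G[V_{Q\setminus\{x,y\}}])\le 2+5k$ colours provided by Lemma~\ref{lemma-trfree}. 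Second, for each triangle $\tau\subseteq Q$ we have $\iocp(G[U_\tau])\le k-1$, since $\tau$ is an induced $C_3$ with no neighbour in $U_\tau$ and so extends any induced odd-cycle packing of $G[U_\tau]$ to one of $G$; hence by the induction hypothesis and monotonicity of $f$ in both arguments, $\chi(G[U_\tau])\le f(k-1,\omega)$. Now colour $G[W]$ as above, give each $G[U_\tau]$ a private palette of $f(k-1,\omega)$ colours with all these palettes and the colours on $W$ pairwise disjoint, and colour each $v\in V(G)\setminus W$ by its colour in the first $U_\tau$ containing it. This is proper: an edge inside $W$ is handled by the colouring of $G[W]$; an edge with exactly one end in $W$ by palette-disjointness; an edge with both ends outside $W$ is either read off from a single $U_\tau$ used for both ends, hence properly coloured there because $U_\tau$ induces a subgraph of $G$, or has its ends coloured from distinct palettes. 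The number of colours used is $\bigl(\omega+(2+5k)\binom{\omega}{2}\bigr)+\binom{\omega}{3}f(k-1,\omega)=f(k,\omega)$, closing the induction.

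The step I expect to be the real obstacle is discovering the decomposition, not executing it: one has to guess that the degree-$\binom{\omega}{3}$ growth comes from the triangles of a single maximum clique $Q$; that sorting the outside-neighbourhood of $Q$ by its trace on $Q$ isolates an independent part (traces of co-size $1$, which can reuse $Q$'s colours) and a triangle-free part (traces of co-size $2$, to which Lemma~\ref{lemma-trfree} applies), together accounting for exactly the non-recursive term $\omega+(2+5k)\binom{\omega}{2}$; and that the remaining vertices are precisely those avoiding $N[\tau]$ for some triangle $\tau\subseteq Q$, which is what feeds the recursion on $k$. Once that is in hand, the trace-versus-clique inequality, the colour reuse, and the ``cover by induced subgraphs, disjoint palettes, first occurrence'' gluing are all routine, the only mild point being that the pieces of the cover must be induced subgraphs so that an edge within the cover is an edge of whichever piece we read its endpoints from.
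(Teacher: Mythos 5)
Your proposal is correct and follows essentially the same route as the paper: the paper also fixes a maximum clique and classifies each outside vertex by a set of at most three of its non-neighbours in the clique, getting an independent set when one vertex is missed (reusing that vertex's colour), a triangle-free graph handled by Lemma~\ref{lemma-trfree} when two are missed, and a drop in $\iocp$ when a whole triangle is missed, with exactly the colour count $\omega+(2+5k)\binom{\omega}{2}+f(k-1,\omega)\binom{\omega}{3}$. Your ``cover by $U_\tau$ plus first-occurrence'' gluing is just a rephrasing of the paper's choice function $A'(v)$ selecting a canonical triple of non-neighbours, so the two arguments coincide in substance.
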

\begin{proof}
We prove the claim by induction on the induced odd cycle packing number.  If $\iocp(G)=0$, then $G$ is bipartite and $\chi(G)\le 2$,
hence suppose that $\iocp(G)>0$ and the claim holds for all graphs with smaller induced odd cycle packing number.  Let $K$ be a largest clique
in $G$, and for each $v\in V(G)\setminus K$, let $A(v)$ denote the set of vertices of $K$ to which $v$ is not adjacent; the maximality of $K$
implies $A(v)\neq\emptyset$.  Let $A'(v)$ be an arbitrary subset of $A(v)$ of size $\min(3,|A(v)|)$.  For a set $X\subseteq K$ with
$|X|\in\{1,2,3\}$, let $B(X)=\{v\in V(G)\setminus K:A'(v)=X\}$.  If $|X|=1$, then the maximality of $K$ implies that $B(X)$ is an independent
set; for each $1$-element set $X$, we use one color for all vertices of $X\cup B(X)$.  If $|X|=2$, then the maximality of $K$ implies
$G[B(X)]$ is triangle-free, and by Lemma~\ref{lemma-trfree}, we can use $2+5\iocp(G)$ colors to color $G[B(X)]$.
Finally, if $|X|=3$, then $\iocp(G[B(X)])\le\iocp(G[B(X)])-1$, since all vertices in $B(X)$ are non-adjacent to the triangle
induced by $X$; hence, we can use $f(\iocp(G)-1,\omega(G))$ colors to color $G[B(X)]$.  Summing the numbers of colors over all
choices of $X$, we conclude that at most $f(\iocp(G),\omega(G))$ colors are used to color $G$.
\end{proof}
Let us remark that we can obtain the coloring as in Theorem~\ref{thm-chi} in polynomial time: instead of choosing $K$ as a largest
clique, the inspection of the proof shows that it suffices to choose one which cannot be enlarged by adding at most three and removing
at most two vertices, and such a clique can be found in polynomial time.

\section{Lower-bound for $\chi$-bounding function}\label{secLowerBound}

In the previous section we showed that every graph $G$ has chromatic number at most $f(\iocp(G),\omega(G))$, where $f$ is of order
roughly $\omega(G)^{3\iocp(G)}$. We do not know whether this upper bound is tight. In this section we show that there exist graphs with induced odd cycle packing number one
whose chromatic number is almost quadratic in $\omega(G)$.
We give a probabilistic construction. In order to carry out the probabilistic calculations, we will use the following bounds. 

\begin{lemma}[See~\cite{molloy2013graph}, Chernoff Bound]\label{chernoff}
Suppose $X$ is a sum of independent $\{0,1\}$-variables. For any $\delta > 0$,
$$\Prob \left[ X \leq (1-\delta)\E[X] \right] \leq e^{-\frac{\delta^2\E[X]}{2}}.$$
\end{lemma}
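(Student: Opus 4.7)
The plan is to apply the standard Chernoff/Cram\'er--Chernoff exponential moment method. Write $X=\sum_{i=1}^n X_i$ with each $X_i\in\{0,1\}$ independent, set $p_i=\E[X_i]$, and $\mu=\E[X]=\sum_i p_i$. For any $t>0$, Markov's inequality applied to the nonnegative variable $e^{-tX}$ gives
$$\Prob[X\le(1-\delta)\mu]=\Prob[e^{-tX}\ge e^{-t(1-\delta)\mu}]\le e^{t(1-\delta)\mu}\,\E[e^{-tX}].$$
By independence, $\E[e^{-tX}]=\prod_i\E[e^{-tX_i}]=\prod_i(1+p_i(e^{-t}-1))$, and $1+x\le e^x$ yields $\E[e^{-tX}]\le e^{(e^{-t}-1)\mu}$.

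I would then optimize over $t>0$. Differentiating the exponent $(e^{-t}-1)\mu+t(1-\delta)\mu$ in $t$ and setting it to zero gives $e^{-t}=1-\delta$, i.e.\ $t=-\ln(1-\delta)$, which is positive precisely when $0<\delta<1$ (the cases $\delta\ge 1$ are handled directly: if $\delta>1$ the event is empty, and if $\delta=1$ then $\Prob[X=0]=\prod_i(1-p_i)\le e^{-\mu}\le e^{-\mu/2}$). Plugging this $t$ back in gives the classical bound
$$\Prob[X\le(1-\delta)\mu]\le\left(\frac{e^{-\delta}}{(1-\delta)^{1-\delta}}\right)^{\!\mu}=\exp\!\Bigl(\mu\bigl(-\delta-(1-\delta)\ln(1-\delta)\bigr)\Bigr).$$

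It then remains to replace this with the cleaner bound $e^{-\delta^2\mu/2}$; equivalently, to show
$$h(\delta)\colonequals -\delta-(1-\delta)\ln(1-\delta)+\tfrac12\delta^2\le 0\qquad\text{for }\delta\in(0,1).$$
I expect this calculus step to be the only real work. One checks $h(0)=0$ and differentiates: $h'(\delta)=\ln(1-\delta)+\delta$, and since the Taylor expansion $\ln(1-\delta)=-\delta-\tfrac12\delta^2-\tfrac13\delta^3-\cdots$ gives $\ln(1-\delta)\le -\delta$ on $(0,1)$, we have $h'(\delta)\le 0$, so $h$ is non-increasing on $[0,1)$ and hence $h(\delta)\le h(0)=0$. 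Combining this with the previous display yields the claimed inequality $\Prob[X\le(1-\delta)\E[X]]\le e^{-\delta^2\E[X]/2}$. The main obstacle is not any of the probabilistic steps (which are routine Chernoff-method manipulations) but the elementary-but-fiddly verification that the optimized exponent dominates the quadratic $-\delta^2/2$; everything else is bookkeeping around Markov's inequality and independence.
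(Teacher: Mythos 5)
Your proof is correct, and since the paper does not prove this lemma at all (it is quoted from Molloy and Reed's book), there is no internal argument to compare against; your derivation via Markov's inequality applied to $e^{-tX}$, optimization at $t=-\ln(1-\delta)$, and the calculus verification that $-\delta-(1-\delta)\ln(1-\delta)\le-\delta^2/2$ on $(0,1)$ is exactly the standard textbook route, with the edge cases $\delta\ge 1$ handled soundly. No gaps.
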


\begin{lemma}[See~\cite{molloy2013graph}, Talagrand's Inequality II]\label{talagrand}
Let $X$ be a non-negative random variable, not identically $0$,
which is determined by $n$ independent trials $T_1$, \ldots, $T_n$, and satisfying the following for some integers $c, r > 0$:
\begin{itemize}
\item Changing the outcome of any one trial can change $X$ by at most $c$.
\item For any non-negative integer $d$, if $X \geq d$, then there is a set of at most $rd$ trials whose outcomes certify that $X \geq d$.
\end{itemize}
Then for every non-negative $t \leq \E[X]$,
$$\Prob \left[ |X - \E[X]| > t + 60c \sqrt{r\E[X]} \right] \leq 4e^{-\frac{t^2}{8c^2r\E[X]}}.$$
\end{lemma}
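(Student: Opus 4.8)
The plan is to recover Lemma~\ref{talagrand} from Talagrand's isoperimetric inequality for product probability spaces; since the statement is precisely ``Talagrand's Inequality II'' of Molloy and Reed, in the paper itself one simply cites \cite{molloy2013graph}, but here is how I would reconstruct it. Let $\Omega=\prod_{i=1}^n\Omega_i$ be the product space carrying the trials $T_1,\ldots,T_n$, and for $\omega\in\Omega$ and $A\subseteq\Omega$ let $d_T(\omega,A)=\sup_{\alpha}\inf_{\omega'\in A}\sum_{i:\omega_i\ne\omega'_i}\alpha_i$ be Talagrand's convex distance, the supremum ranging over nonnegative vectors $\alpha$ with $\|\alpha\|_2\le 1$. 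The one input I would quote without proof is the isoperimetric inequality $\Prob[A]\cdot\Prob[d_T(\cdot,A)\ge s]\le e^{-s^2/4}$, valid for every $s\ge 0$.

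The first genuine step is a deterministic claim converting the two hypotheses on $X$ into a lower bound on $d_T$. Fix reals $b>a\ge 0$ and put $A=\{\omega:X(\omega)\le a\}$. I claim every $\omega$ with $X(\omega)\ge b$ satisfies $d_T(\omega,A)\ge (b-a)/(c\sqrt{rb})$. Indeed, certifiability (applied at the integer level just below $b$, absorbing the rounding) supplies a set $J$ of at most $rb$ coordinates such that any point agreeing with $\omega$ on $J$ still has $X\ge b$; for an arbitrary $\omega'\in A$, the point $\omega''$ obtained by keeping the $\omega$-values on $J$ and the $\omega'$-values off $J$ is such a point, and since $\omega''$ and $\omega'$ differ only inside $J$ while $X(\omega'')-X(\omega')\ge b-a$, the Lipschitz hypothesis forces $\omega$ and $\omega'$ to disagree on at least $(b-a)/c$ coordinates of $J$. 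Evaluating $d_T(\omega,A)$ with $\alpha$ equal to $|J|^{-1/2}$ on $J$ and $0$ elsewhere then gives the bound. Feeding this into the isoperimetric inequality with $s=(b-a)/(c\sqrt{rb})$ yields the one-sided estimate $\Prob[X\ge b]\cdot\Prob[X\le a]\le e^{-(b-a)^2/(4c^2rb)}$ for all $b>a\ge 0$.

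Next I would center everything at a median $m$ of $X$. Choosing $(a,b)=(m-s,m)$ and using $\Prob[X\ge m]\ge 1/2$ gives the lower tail $\Prob[X\le m-s]\le 2e^{-s^2/(4c^2rm)}$; choosing $(a,b)=(m,m+s)$ and using $\Prob[X\le m]\ge 1/2$ gives $\Prob[X\ge m+s]\le 2e^{-s^2/(4c^2r(m+s))}$, whose $m+s$ denominator is the one slightly awkward point, handled by splitting into $s\le m$ (replace $m+s$ by $2m$) and $s>m$ (a cruder bound, the tail still decaying). Integrating these tails shows $|\E[X]-m|=O(c\sqrt{r\,\E[X]})$ and $m=O(\E[X])$, after which I would rewrite the median bounds in terms of $\E[X]$, absorb both the median--mean gap and the $m+s$-versus-$\E[X]$ discrepancy into the additive term, and verify that the constants can be chosen so that $\Prob[|X-\E[X]|>t+60c\sqrt{r\,\E[X]}]\le 4e^{-t^2/(8c^2r\,\E[X])}$ holds for all $0\le t\le\E[X]$ (the hypotheses $t\le\E[X]$ and $X\not\equiv 0$ being exactly what keeps these manipulations legitimate).

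The main obstacle is not in the ideas but in two pieces of bookkeeping. First, Talagrand's isoperimetric inequality itself has a nontrivial (if by now standard) proof by induction on the number of coordinates together with a convexity argument, so a fully self-contained treatment would have to reproduce that; I would instead cite it. Second, the descent from the clean median-centered tails to the stated mean-centered inequality with the specific constants $60$ and $8$ requires careful control of the $m+s$ denominator, of the median--mean gap, and of the rounding in the certifiability step, and it is only the slack hidden in those generous constants that makes the final clean statement go through. Given all this, the honest conclusion is that in the paper this lemma should simply be invoked as it stands from \cite{molloy2013graph}.
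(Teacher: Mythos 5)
The paper gives no proof of this lemma at all: it is quoted verbatim (as ``Talagrand's Inequality II'') from \cite{molloy2013graph}, exactly as you conclude one should do. Your sketch of how it would be recovered from the convex-distance isoperimetric inequality --- the certificate/Lipschitz argument giving $\Prob[X\le a]\,\Prob[X\ge b]\le e^{-(b-a)^2/(4c^2rb)}$, then median-centering and absorbing the median--mean gap into the $60c\sqrt{r\E[X]}$ term --- is the standard derivation carried out in that reference, so there is nothing to correct.
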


We are now ready to give the construction, which is a variation on a standard construction of triangle-free graphs with no large independent set.

\begin{theorem}\label{thmProbConst}
There exists a family of graphs with induced odd cycle packing number at most one and with arbitrarily large clique number such that every graph $H$
in this family satisfies $\chi(H) = \Omega\Bigl(\frac{\omega^2(H)}{\log^2 \omega(H)}\Bigr)$.
\end{theorem}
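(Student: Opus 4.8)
The plan is to take the standard random-graph construction that yields triangle-free graphs with no large independent set (the Kim / Bollob\'as-type bound giving $R(3,t)=\Theta(t^2/\log t)$), and modify it so that it contains one prescribed clique $K$ of the desired size $\omega$ but nothing that would push the induced odd cycle packing number above one. Concretely, I would take a random graph $G(n,p)$ on vertex set $V$ with $p = c\sqrt{\log n}/\sqrt n$ for a suitable constant $c$, delete one vertex from each triangle to obtain a triangle-free graph $G_0$ on $m = \Theta(n)$ vertices with $\alpha(G_0) = O(\sqrt n\log n)$ (the usual argument, using the Chernoff bound of Lemma~\ref{chernoff} to control how many vertices survive and a first-moment/union bound, controlled via Talagrand's inequality of Lemma~\ref{talagrand}, to bound the independence number). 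Then I would add a clique $K$ of size $\omega$ as a new set of vertices, joined to the rest in a controlled way — for instance, joining $K$ completely to $G_0$ would be fine for $\chi$ but I must check it does not create a second induced odd cycle vertex-disjoint from an odd cycle hidden near $K$; the cleanest choice is to let $H = G_0$ together with a clique $K$ on $\omega$ fresh vertices with \emph{no} edges between $K$ and $V(G_0)$, i.e.\ a disjoint union. Then $\omega(H) = \omega$ (for $\omega\ge 3$), and $\chi(H) = \max(\chi(G_0),\omega)$.

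The point is that $\chi(G_0) \ge |V(G_0)|/\alpha(G_0) = \Omega(\sqrt n/\log n)$, so if I pick $n$ so that $\sqrt n \asymp \omega\log\omega$, then $\chi(G_0) = \Omega(\omega\log\omega/\log(\omega\log\omega)) = \Omega(\omega)$ — which is not yet quadratic. To get the claimed $\Omega(\omega^2/\log^2\omega)$ I instead need the clique and the sparse part to \emph{interact}, the way the Mycielski-type / blow-up constructions do: the right construction is a random graph whose largest clique has size $\omega$ and whose independence number is still $O(\sqrt n \log n)$, so that $\chi = \Omega(n/(\sqrt n\log n)) = \Omega(\sqrt n/\log n)$ while $\omega = O(\sqrt{\log n})$-ish, giving $\chi = \Omega(\omega^2/\log^2\omega)$ after substituting. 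So I would work directly with $G(n,p)$ at the density $p$ where the clique number is already $\omega \approx 2\log n/\log(1/p)$, choose $p$ as a small constant (or a slowly-growing-inverse function) so that $\omega = \Theta(\log n)$ and simultaneously $\alpha(G(n,p)) = \Theta(\log n \cdot$ something$)$ — no, that gives $\chi$ only polylog. The correct regime is the triangle-free one: keep $p = \Theta(\sqrt{\log n/n})$, delete a vertex per triangle, and \emph{then} observe that after deletion the surviving graph still has clique number $2$; to raise the clique number to $\omega$ while keeping $\alpha$ small, substitute each vertex by a clique of size $\omega-1$ (a balanced blow-up). Blowing up each vertex of $G_0$ into a clique $Q_v$ of size $\ell$ multiplies $\omega$ by $\ell$ roughly (an independent set meets each $Q_v$ in at most one vertex, so $\alpha$ is unchanged), makes $\omega(H) = 2\ell$ if $G_0$ has an edge, hence $\omega(H) = \Theta(\ell)$, and gives $\chi(H)\ge |V(H)|/\alpha(H) = \ell m/\alpha(G_0) = \Theta(\ell\sqrt n/\log n)$; with $\ell$ chosen so $\ell \asymp \sqrt n$, we get $\omega(H) = \Theta(\sqrt n)$ and $\chi(H) = \Omega(n/\log n) = \Omega(\omega^2/\log^2\omega)$ after rescaling $\log n = \Theta(\log\omega)$.

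The remaining and, I expect, the main obstacle is verifying $\iocp(H)\le 1$. The blow-up of a triangle-free graph is a graph where every induced odd cycle is either a triangle inside some $Q_v$, or passes through several distinct blobs; two vertex-disjoint induced odd cycles with no edge between them would project to two vertex-disjoint odd closed walks in $G_0$ with no edge between their vertex sets (using that there is an edge between $Q_u$ and $Q_v$ iff $uv\in E(G_0)$), and since $G_0$ is triangle-free these walks contain odd cycles of length $\ge 5$, so I would need $G_0$ itself to be $2$-OC-free. That is exactly the kind of statement the random triangle-free graph satisfies with high probability for the right density: a short calculation via the first moment method (union bound over pairs of vertex-disjoint cycles of bounded length, plus a separate argument for long cycles using that the graph is sparse enough that two long induced paths with prescribed nonadjacency are unlikely) shows that whp $G(n,p)$ at density $p = \Theta(\sqrt{\log n/n})$ has no two vertex-disjoint induced odd cycles with no connecting edge. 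I would carry the argument out in this order: (i) set up $G(n,p)$, (ii) Chernoff/Talagrand to get $\alpha(G_0) = O(\sqrt n\log n)$ and $|V(G_0)| = \Omega(n)$ whp, (iii) first-moment bound to get $\iocp(G_0)\le 1$ whp, (iv) take the balanced clique blow-up $H$ with blob size $\ell = \lceil\sqrt n\rceil$, (v) read off $\omega(H) = \Theta(\sqrt n)$, $\alpha(H) = O(\sqrt n\log n)$, hence $\chi(H)\ge |V(H)|/\alpha(H) = \Omega(\sqrt n/\log n\cdot\ell) = \Omega(n/\log n)$, (vi) lift $\iocp(H)\le 1$ from $\iocp(G_0)\le 1$ via the projection argument, and (vii) substitute $\log n = \Theta(\log\omega(H))$ and $\omega(H)=\Theta(\sqrt n)$ to conclude $\chi(H) = \Omega(\omega^2(H)/\log^2\omega(H))$, and let $n\to\infty$ to get arbitrarily large clique number. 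The delicate point throughout is choosing the constant in $p$ and the blob size $\ell$ consistently so that all four estimates (on $|V(G_0)|$, on $\alpha$, on $\iocp$, and the final rescaling) hold simultaneously; I expect step (iii), the first-moment bound on $\iocp(G_0)$, to need the most care, since naively summing over all pairs of induced odd cycles of unbounded length diverges and one must exploit the edge-density threshold to rule out long ones.
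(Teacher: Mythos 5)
Your plan fails at the step you yourself flagged as the hardest, namely lifting $\iocp\le1$ from $G_0$ to the clique blow-up $H$, and it fails for a structural reason, not a probabilistic one. In a balanced blow-up in which each vertex $v$ of the triangle-free graph $G_0$ is replaced by a clique $Q_v$ of size $\ell\ge 3$, every blob contains a triangle; for any two \emph{non-adjacent} vertices $u,v$ of $G_0$ there is no edge between $Q_u$ and $Q_v$, so a triangle inside $Q_u$ together with a triangle inside $Q_v$ induces $2K_3$ in $H$. A triangle-free graph on $\Omega(n)$ vertices has plenty of non-adjacent pairs, so $\iocp(H)\ge 2$, no matter how carefully you chose $G_0$. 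Your projection argument is aimed at induced cycles of length at least $5$ that traverse several blobs (and in fact does work there, since such a cycle visits each blob at most once), but it simply says nothing about triangles sitting inside a single blob, and those alone kill the construction. There is a second failure mode of the same flavor: triangles spanning two adjacent blobs project to single edges of $G_0$, so two $G_0$-edges with disjoint, mutually non-adjacent endpoints again force $\iocp(H)\ge 2$. Separately, step (iii) — showing $\iocp(G_0)\le1$ for the random triangle-free graph — is also not just a technicality: at density $p=\Theta(\sqrt{\log n/n})$ the random graph has $\Theta((np)^5)$ induced $5$-cycles and easily contains two vertex-disjoint ones with no edge between them, so one would need to delete further structure rather than appeal to a first-moment bound.

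The paper sidesteps all of this by working in the \emph{complement}. It takes $G = G(n,1/k)$ with $n=k^2/2$, deletes a maximal set of edges that partitions into vertex-disjoint copies of $K_{3,3}$ (so that the remaining graph $G_0$ has no $K_{3,3}$ subgraph), and sets $H=\overline{G_0}$. Two vertex-disjoint induced odd cycles of $H$ with no edge between them would, via any triple of vertices chosen from each, exhibit a $K_{3,3}$ in $G_0$, which is impossible; the same reasoning gives $\alpha(H)\le 5$ and hence $\chi(H)\ge n/5=\Omega(k^2)$ with no probabilistic work at all. The Chernoff and Talagrand bounds enter only to show $\omega(H)=\alpha(G_0)=O(k\log k)$, by comparing for each candidate independent set $S$ the number of edges inside $S$ with the number of those edges that could belong to the deleted $K_{3,3}$-decomposition. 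This buys a completely clean and deterministic verification of $\iocp(H)\le1$, whereas your blow-up approach cannot be repaired by a better choice of parameters.
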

\begin{proof}
Let $G$ be a random $G(n,p)$ graph for $p = 1/k$, where $k$ is a sufficiently large even integer and $n = k^2/2$.

Suppose $A_1,A_2\subset V(G)$ are disjoint and have size three, and $G$ contains all nine edges
with one end in $A_1$ and the other end in $A_2$; in this case, we say that the set of these nine edges
\emph{forms a $K_{3,3}$}. We say a set of edges of $G$ is \emph{bad} if it can be partitioned into subsets of size $9$, each of which forms a $K_{3,3}$.
Finally we say a set of edges is \emph{subbad} if it is a subset of a bad set. 

We construct a graph $G_0$ by deleting a maximal bad set $B$ from $G$.  By the maximality of $B$, no set of edges of $G_0$ forms a $K_{3,3}$.
Let $H$ be the complement of $G_0$. Consider any disjoint (odd) cycles in $H$. 
If there were no edge between these cycles, then any pair of triples of vertices taken one from each cycle would induce a complement of a supergraph of $K_{3,3}$ in $G_0$,
which is a contradiction.  Consequently $\iocp(H) \leq 1$.  Furthermore, an analogous argument shows that $\alpha(H)\le 5$,
and thus $\chi(H) \geq n/5 = \Omega(k^2)$.

Therefore, it suffices to argue that $\omega(H) =\alpha(G_0)=O(k \log k)$ with non-zero probability.
Let us consider a set $S \subseteq V(G)$ of size at least $k$ and define the following random variables.
\begin{align*}
X_1 &= | E(G[S]) |\\
X_2 &= \max \bigl\{ |Z| : \text{$Z \subseteq E(G[S])$, $Z$ is subbad in $G$}\bigr\}
\end{align*}
The probability that $S$ is an independent set in $G_0$ is at most $\Prob[X_1 \leq X_2]$.
Indeed, if $S$ is independent, then $E(G[S])\subseteq B$ is subbad, and thus $X_1=X_2$.
Let $s=\binom{|S|}{2}=\Omega(k^2)$, so that $\E[X_1]=s/k$. Using the Chernoff bound (Lemma~\ref{chernoff}), we obtain
$$\Prob\left[X_1 \leq\frac{s}{2k}\right]=\Prob[X_1 \leq \E[X_1]/2] \leq e^{-\frac{\E[X_1]}{8}} = e^{-\frac{s}{8k}}.$$
Let $Z_2=\{e\in E(G[S]): \text{some set containing $e$ forms a $K_{3,3}$ in $G$}\}$.
Clearly, if $Z\subseteq E(G[S])$ is subbad, then $Z\subseteq Z_2$, and thus $X_2\le |Z_2|$.
For distinct vertices $x,y\in V(G)$, the probability that $xy$ is an edge and some set containing $xy$ forms a $K_{3,3}$ in $G$
is at most $\tfrac{n^4}{k^9}=\tfrac{1}{16k}$, and thus $\E[X_2]\le \E[|Z_2|]\le \tfrac{s}{16k}$.
Let $\delta=\tfrac{s}{16k}-\E[X_2]\ge 0$ and let $X'_2=X_2+\delta$, so that
$\E[X'_2]=\tfrac{s}{16k}$.  Note that flipping the existence of a single edge in $G$
changes $X'_2$ by at most $9$.
Furthermore, when $X'_2 \geq d$, there exist at most $9d$ edges in $G$ whose presence certifies this is the case.
Hence, we can apply the Talagrand's inequality (Lemma~\ref{talagrand}) with $c=r=9$
for $t=\E[X'_2]=\tfrac{s}{16k}$.  Note that $60c\sqrt{r\E[X'_2]}\le \E[X'_2]$ for $k$ large enough, since $s=\Omega(k^2)$.  Hence, we have
$$\Prob\left[X'_2>\frac{3s}{16k}\right]=\Prob[X'_2 > 3\E[X'_2]] \leq 4e^{-\frac{\E[X'_2]}{5832}}=4e^{-\frac{s}{93312k}}.$$
It follows that
\begin{align*}
\Prob[X_1\le X_2] &\leq \Prob[X_1\le X'_2]\le \Prob\left[X_1\le \frac{s}{2k}\right]+\Prob\left[X'_2> \frac{3s}{16k}\right]\\
&\leq e^{-\frac{s}{8k}} + 4e^{-\frac{s}{93312k}} < e^{-\frac{|S|^2}{200000k}}
\end{align*}
for $k$ large enough.

Therefore, for any set $S\subseteq V(G_0)$ of size $q\ge k$, the probability that $S$ is an independent set
in $G_0$ is at most $e^{-\frac{q^2}{200000k}}$.  Hence, the probability that $G_0$ contains an independent set of
size $q$ is less than
\begin{align*}
\binom{n}{q}e^{-\frac{q^2}{200000k}}&\le \left(\frac{ne}{q}\right)^qe^{-\frac{q^2}{200000k}}\\
&\le (ke)^qe^{-\frac{q^2}{200000k}}=\exp\Bigl(q\log(ke)-\tfrac{q^2}{200000k}\Bigr),
\end{align*}
that is, smaller than $1$ when $q\ge 200000k\log(ke)$.
\end{proof}

As noted in the proof, the probabilistic construction used is unnecessarily
restrictive, excluding all disjoint cycles regardless of parity. Similarly, all
cycles or paths of length $\geq 8$ are excluded. From the point of view of
vertex 6-tuples, all of these structures exhibit very similar properties. It
would seem that achieving a distinction between these patterns and the ones
that are necessary to avoid requires more global conditions and thus a much
more refined approach. 

\section{QPTAS assuming only bounded $\iocp$}\label{secQPTAS}

The fact that triangle-free graphs with bounded $\iocp$ have bounded chromatic number has the following easy consequence.

\begin{lemma}\label{lemma-degree}
For all integers $p\ge 1$, every graph $G$ with $n$ vertices satisfies at least one of the following conditions:
\begin{itemize}
\item $G$ has an independent set of size at least $\tfrac{n}{4+10\iocp(G)}$, or
\item every maximal packing of triangles in $G$ contains a triangle $T$ such that $\alpha(G-N(T))\ge (1-1/p)\alpha(G)$, or
\item $G$ contains a vertex $v$ of degree at least $\tfrac{n}{18p}$ such that $\alpha(G-N(v))=\alpha(G)$.
\end{itemize}
\end{lemma}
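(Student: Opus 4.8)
The plan is to mimic the structure of Lemma~\ref{lemma-decomp}: take a maximal triangle packing, and argue that either it is small (in which case deleting it leaves a triangle-free graph with bounded $\iocp$, hence by Lemma~\ref{lemma-trfree} a large independent set, giving the first outcome), or it is large, in which case we average over the triangles against a fixed maximum independent set $J$. First I would let $\mathcal{T}$ be a maximal packing of triangles in $G$ and set $U = V(\mathcal{T})$. If $|U| \le n/2$, then $G - U$ is triangle-free with $\iocp(G-U) \le \iocp(G)$, so by Lemma~\ref{lemma-trfree} it has an independent set of size at least $|V(G-U)|/(2 + 5\iocp(G)) \ge (n/2)/(2+5\iocp(G)) = n/(4 + 10\iocp(G))$, and the first outcome holds. (One should double-check the exact constant one wants here; the bound $n/(4+10\iocp(G))$ is what the statement asks for, and $|U| \le n/2$ is the natural threshold to make it come out, though a slightly different split might be needed to get precisely this constant — if so, adjust the case threshold accordingly.)

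So I may assume $|U| > n/2$, hence $|\mathcal{T}| > n/6$. Now fix a largest independent set $J$, so $\alpha(G) = |J|$. If some triangle $T \in \mathcal{T}$ has $|N(T) \cap J| \le |J|/p$, then $\alpha(G - N(T)) \ge |J| - |J|/p = (1-1/p)\alpha(G)$ and the second outcome holds (and in fact this argument applies to any maximal triangle packing, which is how the statement is phrased). Otherwise every $T \in \mathcal{T}$ satisfies $|N(T) \cap J| > |J|/p$, so for each such $T$ there is a vertex $b_T \in V(T)$ with $|N(b_T) \cap J| > |J|/(3p)$. Collecting $B = \{b_T : T \in \mathcal{T}\}$ (a set of at least $|\mathcal{T}|/3 > n/18$ distinct vertices, since the triangles are disjoint), double-counting edges between $B$ and $J$ gives $\sum_{v \in J} \deg v \ge \sum_{b \in B} |N(b) \cap J| > \frac{|B| \cdot |J|}{3p} > \frac{n |J|}{54 p}$.

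From this average, a pigeonhole step extracts a single high-degree vertex: since each $v \in J$ has $\deg v < n$, there must exist $v \in J$ with $\deg v > \frac{n}{54p}$ — in fact a positive fraction of $J$ has degree exceeding, say, $\frac{n}{108p}$, but we only need one such vertex, and we want the cleaner bound $\frac{n}{18p}$ claimed in the statement. Here I would be a bit more careful with the constants: one can sharpen the choice of $b_T$ or the triangle-packing threshold to push $\frac{1}{54p}$ up to $\frac{1}{18p}$ (for instance, if $|U| > \tfrac{2n}{3}$ suffices for the first-outcome arithmetic, then $|\mathcal{T}| > \tfrac{2n}{9}$ and $|B| > \tfrac{2n}{9}$, improving the final fraction). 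For any such $v$, since $v \in J$ and $J$ is a maximum independent set, $J \setminus N(v) = J$ is an independent set of $G - N(v)$ of size $\alpha(G)$, so $\alpha(G - N(v)) = \alpha(G)$, and the third outcome holds. The main obstacle is purely bookkeeping: making the three constants $4+10\iocp(G)$, $1-1/p$, and $\frac{1}{18p}$ come out exactly as stated by choosing the right case threshold for $|U|$ and the right degree cutoff; the structural content (maximal triangle packing $+$ Lemma~\ref{lemma-trfree} $+$ double counting) is routine.
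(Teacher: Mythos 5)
Your proposal follows exactly the paper's argument: maximal triangle packing, Lemma~\ref{lemma-trfree} for the small-packing case, then averaging a maximum independent set $J$ against the triangles. The only flaw is the count of $B$: since the triangles are vertex-disjoint and you pick one representative $b_T$ per triangle, the $b_T$ are distinct and $|B|=|\mathcal{T}|>n/6$, not merely $|\mathcal{T}|/3>n/18$. With the correct count, the double counting gives $\sum_{v\in J}|N(v)\cap B|=\sum_{b\in B}|N(b)\cap J|>\frac{|B|\,|J|}{3p}>\frac{n|J|}{18p}$, and averaging over $J$ yields a vertex $v\in J$ with $\deg v\ge |N(v)\cap B|>\frac{n}{18p}$ --- exactly the stated constant, with no need to tamper with the case threshold. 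Your suggested repair (raising the threshold to $|U|>\tfrac{2n}{3}$) would actually break the first outcome, since $|U|\le \tfrac{2n}{3}$ only guarantees an independent set of size $\tfrac{n/3}{2+5\iocp(G)}=\tfrac{n}{6+15\iocp(G)}$, weaker than the claimed $\tfrac{n}{4+10\iocp(G)}$; the threshold $|U|\le n/2$ (equivalently $m<n/6$) is the right one and is what the paper uses. Aside from this bookkeeping slip, the proof is correct and coincides with the paper's.
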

\begin{proof}
Let $\XX=\{T_1,\ldots, T_m\}$ be a maximal packing of triangles in $G$.  The graph $G-V(\XX)$ is triangle-free,
and by Lemma~\ref{lemma-trfree}, $\chi(G-V(\XX))\le 2+5\iocp(G)$.  Consequently, $\alpha(G)\ge \alpha(G-V(\XX))\ge \frac{n-3m}{2+5\iocp(G)}$.
Suppose that $G$ does not have any independent set of size at least $\tfrac{n}{4+10\iocp(G)}$, and thus $m\ge n/6$.
Let $J$ be a largest independent set in $G$.  If $|N(T_i)\cap J|\le |J|/p$ for some $i\in\{1,\ldots,m\}$, then
the second outcome of the lemma holds.

Otherwise, for each $i\in \{1,\ldots,m\}$, there exists $v_i\in V(T_i)$ such that $|N(v_i)\cap J|>\tfrac{|J|}{3p}$.
Consequently, there exists a vertex $v\in J$ such that $|N(v)\cap \{v_1,\ldots,v_m\}|\ge \tfrac{m}{3p}$,
and thus $\deg v\ge \tfrac{m}{3p}\ge \tfrac{n}{18p}$.  Since $v\in J$, we have $\alpha(G-N(v))=\alpha(G)$.
\end{proof}

Combining this lemma with Theorem~\ref{thm-main}, we obtain a QPTAS for the maximum independent set
in graphs of bounded induced odd cycle packing number.

\begin{theorem}\label{thm-qptas}
There exists a randomized algorithm that, for input integers $k\ge 0$ and $p\ge 1$ and an $n$-vertex graph $G$ of induced odd cycle packing number at most $k$, returns in
time $n^{O(k+p\log n)}$ an independent set of $G$ whose size is at least $(1-k/p)\alpha(G)$ with probability at least $1/2$.
\end{theorem}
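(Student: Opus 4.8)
The plan is to recurse on the structure provided by Lemma~\ref{lemma-degree}, using Theorem~\ref{thm-main} as a base-case engine whenever the independence number is large relative to the number of vertices. First I would fix $p$ and aim to produce an independent set of size at least $(1-k/p)\alpha(G)$; it is convenient to track the claim that a recursive call on an induced subgraph $G'$ with $\iocp(G')\le k'$ returns an independent set of size at least $(1-k'/p)\alpha(G')$ (up to the success probability, which I accumulate along the way). The recursion terminates immediately if $G'$ is triangle-free, since then $\chi(G')\le 2+5k'$ by Lemma~\ref{lemma-trfree}, so the largest color class is an independent set of size at least $|V(G')|/(2+5k')\ge (1-k'/p)\alpha(G')$ once $\alpha(G')$ is not too large compared to $|V(G')|$; more precisely, if $\alpha(G')$ is a large enough fraction of $|V(G')|$ we instead invoke Theorem~\ref{thm-main} to get within an additive $|V(G')|/t$ of $\alpha(G')$ for a suitable $t$, which translates into the desired multiplicative guarantee.

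Next, in the main case I apply Lemma~\ref{lemma-degree} to $G'$ with the given $p$. If the first outcome holds, the large independent set it guarantees (which I can find by taking the heaviest color class of $G'-V(\XX)$ for a maximal triangle packing $\XX$) is already good enough when $\alpha(G')$ is within a constant factor of $|V(G')|$; and when $\alpha(G')$ is a small fraction of $|V(G')|$, I would instead branch. For the second outcome, I recurse on $G'-N(T)-V(T)$ with $k'$ replaced by $k'-1$ (this graph has $\iocp$ at most $k'-1$ since the triangle $T$ is disjoint and non-adjacent to it), add a single vertex of $T$, and use $\alpha(G'-N(T))\ge (1-1/p)\alpha(G')$ to conclude; since a maximal triangle packing has at most $|V(G')|/3$ triangles, there are at most $n/3$ such branches. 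For the third outcome, I recurse on $G'-N(v)$ with the same $k'$; because $\deg v\ge |V(G')|/(18p)$, the vertex count drops by a factor $1-1/(18p)$, so this branch can be iterated only $O(p\log n)$ times before the graph becomes small. As in the proof of Theorem~\ref{thm-main}, I trace the run taking the (unconditionally correct) triangle-branch when it exists and the vertex-deletion branch otherwise; the $k'$-decreasing branches contribute a factor of at most $n$ each across the at most $k$ levels, and the depth-$O(p\log n)$ vertex-deletion chains contribute a polynomial-in-$n$ factor with exponent $O(p\log n)$, giving total work $n^{O(k+p\log n)}$.

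The main obstacle is the probabilistic bookkeeping: the third outcome of Lemma~\ref{lemma-degree} is deterministic (a vertex $v$ with $\alpha(G-N(v))=\alpha(G)$ genuinely exists), unlike the sampling step in Lemma~\ref{lemma-decomp}, so here the only randomness comes from the invocations of Theorem~\ref{thm-main} at the leaves. I would bound the number of leaves by $n^{O(k+p\log n)}$, call Theorem~\ref{thm-main} at each with its failure probability driven below $1/(2L)$ where $L$ is that leaf count (which costs only an extra $\log L=O((k+p\log n)\log n)$ repetitions per leaf, absorbed into the stated time bound), and union-bound so that with probability at least $1/2$ every leaf invocation succeeds. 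The remaining care is purely arithmetic: choosing the threshold separating the ``large $\alpha$'' regime (handled by Theorem~\ref{thm-main} or by the first/second outcomes directly) from the ``small $\alpha$'' regime (where we must branch), and checking that along any root-to-leaf path the accumulated multiplicative losses $(1-1/p)$ from the triangle-branches, one per drop in $k'$, compose to at least $1-k/p$. None of this is deep once the recursion tree is set up as above.
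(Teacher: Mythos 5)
Your proposal mirrors the paper's own proof: both use Lemma~\ref{lemma-degree} as the recursive decomposition (triangle-branch decreasing $k'$, high-degree-vertex branch shrinking the vertex set geometrically), invoke Theorem~\ref{thm-main} when $\alpha(G')$ is a constant fraction of $|V(G')|$, bound the recursion-tree size by $n^{O(k+p\log n)}$ via the observation that root-to-leaf paths contain at most $k$ level-decreasing branches and $O(p\log n)$ vertex-deletion branches with branching factor $O(n)$, and union-bound the failure probability over the (amplified) invocations of Theorem~\ref{thm-main}. The minor cosmetic differences (boosting each leaf to failure $1/(2L)$ rather than the paper's simpler choice of $n$ repetitions giving $2^{-n}$, or your optional triangle-free shortcut via Lemma~\ref{lemma-trfree}) do not change the argument.
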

\begin{proof}
If $k=0$ (so $G$ is bipartite), we return the largest maximum independent set obtained
via a maximum flow algorithm.  Otherwise, we find a maximal packing of triangles $\XX$ in $G$ greedily, and return the largest
of the independent sets obtained by
\begin{itemize}
\item[(a)] running the algorithm from Theorem~\ref{thm-main} $n$ times with $t=(4+10k)p$,
\item[(b)] for each $T\in\XX$, running the algorithm recursively for $G-N[T]$ with $k$ replaced by $k-1$ and adding a vertex of $T$ to the returned independent set, and
\item[(c)] for each $v\in V(G)$ of degree at least $\tfrac{n}{18p}$, running the algorithm recursively for $G-N(v)$.
\end{itemize}
Each recursive call either decreases $k$ or decreases the number of vertices by a factor of at most $\bigl(1-\tfrac{1}{18p}\bigr)$,
implying the total number of the calls of the procedure is at most $n^{O(k+p\log n)}$.  Iterating the algorithm from Theorem~\ref{thm-main} $n$ times
for an induced subgraph $G'$ of $G$ ensures we fail to find an independent set of size at least $\alpha(G')-n/t$ with probability at most $2^{-n}$.
Hence, with probability at least $1-n^{O(k+p\log n)}2^{-n}>1/2$ (for $n$ large enough---for small $n$, we can just find the largest independent set by brute force), we can assume that throughout the run of the algorithm,
in part (a) for an induced subgraph $G'$ of $G$, at least one of the returned independent sets has size at least
$\alpha(G')-n/t$.

If $\alpha(G)\ge \tfrac{n}{4+10k}$, then in (a) we return an independent set of size
at least $\alpha(G)-n/t\ge \alpha(G)-(4+10k)\alpha(G)/t=(1-1/p)\alpha(G)$.  If $G$ contains a vertex $v$ of degree at least
$\tfrac{n}{18p}$ such that $\alpha(G-N(v))=\alpha(G)$, then in (c) the corresponding recursive call gives an independent set of size
at least $(1-k/p)\alpha(G-N(v))=(1-k/p)\alpha(G)$.

If neither of these conditions holds, Lemma~\ref{lemma-degree} implies 
there exists a triangle $T\in\XX$ such that $\alpha(G-N(T))\ge (1-1/p)\alpha(G)$.
The recursive call in (b) returns an independent set $I$ of $G-N[T]$ of size at least $(1-(k-1)/p)\alpha(G-N[T])$,
and since $\alpha(G-N(T))=\alpha(G-N[T])+1$, the addition of a vertex of $T$ turns $I$ into an independent set of size at least
$(1-(k-1)/p)(\alpha(G-N(T))-1)+1\ge (1-(k-1)/p)\alpha(G-N(T))\ge (1-(k-1)/p)(1-1/p)\alpha(G)\ge (1-k/p)\alpha(G)$.
Hence, the algorithm is correct.
\end{proof}

Considering Theorem~\ref{thm-qptas}, it is of course natural to ask whether the maximum independent set problem
admits a PTAS on graphs with bounded induced odd cycle packing number without any other assumptions.

\section*{Acknowledgments}

We would like to thank the anonymous reviewer, whose suggestion greatly simplified our analysis of
the algorithm in Theorem~\ref{thm-main}.


\end{document}